\colorlet{lightgray}{gray!60}
\theoremstyle{definition}
\newcommand{\real}{\mathbb{R}}
\newcommand{\underl}{\underline}
\newcommand{\p}{\mathbf{P}}
\newtheorem{lemma}{Lemma}
\newtheorem{theorem}{Theorem}
\newtheorem{corollary}{Corollary}
\newtheorem{example}{Example}
\newtheorem*{lem:main}{Lemma \ref{lem:main}}
\newcommand{\eps}{\varepsilon}
\newcommand{\be}{\begin{equation}}
\newcommand{\ee}{\end{equation}}
\begin{document}
\title{Informationally Robust Cheap-Talk}
\author{Itai Arieli\thanks{Faculty of Industrial Engineering and Management, Technion, Israel.
	iarieli@technion.ac.il.} \and Ronen Gradwohl\thanks{Department of Economics and Business Administration, Ariel University, Israel. roneng@ariel.ac.il.}
 \and Rann Smorodinsky\thanks{Faculty of Industrial Engineering and Management, Technion, Israel.
	rann@ie.technion.ac.il.}}
\date{}

\maketitle
\begin{abstract}
We study the robustness of cheap-talk equilibria to infinitesimal private information of the receiver in a model with a binary state-space and state-independent sender-preferences. We show that the sender-optimal equilibrium is robust if and only if this equilibrium either reveals no information to the receiver or fully reveals one of the states with positive probability. We then characterize the actions that can be played with positive probability in any robust equilibrium. Finally, we fully characterize the optimal sender-utility under binary receiver's private information, and provide bounds for the optimal sender-utility under general private information.
\end{abstract}

\section{Introduction}
The literature on strategic communication studies how and to what extent an informed party can influence the behavior of others solely through the selective revelation of information. Two of the most-studied models of strategic communication are cheap talk \citep{crawford1982strategic} and Bayesian persuasion \citep{kamenica2011bayesian}.\footnote{Other models include costly signaling \citep{spence1978job} and disclosure of verifiable information \citep{grossman1981informational,milgrom1981good}.} In cheap talk, an informed party (the {\em sender}) strategically chooses what information to reveal to an uniformed party (the {\em receiver}), and the latter then takes the former's preferences into account when interpreting and acting on the revealed information. Bayesian persuasion is similar, except that the sender has the additional power to commit to how she will reveal information prior to observing it herself. Both models have been widely studied in the past decade and have been instrumental in shaping our understanding of how information can be used to affect behavior.

In many models of cheap talk and Bayesian persuasion, the sender-optimal equilibria are well understood---see the  work of \citet{lipnowski2020cheap} for cheap talk and that of \citet{kamenica2011bayesian} for Bayesian persuasion. In this paper we analyze the informational robustness of these equilibria, and determine the extent to which they survive the introduction of an infinitesimal amount of private information to the receiver. We show that, while under Bayesian persuasion all equilibria are informationally robust, under cheap talk some equilibria are much more vulnerable to private information. 

Although the robustness of cheap-talk equilibria to other modeling modifications has been studied before,\footnote{See the literature review below.} robustness to the receiver's private information is particularly important. A main goal in models of strategic communication is to determine the degree to which informational manipulation can influence behavior, and to this end the standard cheap-talk model makes the extreme assumption that only the manipulating party has any information.  But the assumption that the receiver has no other source of information, and relies solely on the sender, is typically not entirely accurate. And if the model's conclusions change dramatically when the assumption is only approximately true---that is, when the receiver does have some other information source, no matter how minor---then the model's predictions are no longer reliable. The main question, then, is whether or not the conclusions of the cheap-talk model are sensitive to minor deviations from the assumption of complete ignorance on the part of the receiver.

In this paper we study this question under state-independent sender utilities \citep[as in][]{lipnowski2020cheap}, and focus on a binary state-space. Our first main result, Theorem \ref{th:fb}, provides a necessary and sufficient condition for the sender-optimal equilibrium to be robust to infinitesimal private information of the receiver. We show that robustness holds if and only if the sender-optimal equilibrium is either trivial or fully reveals one of the states to the receiver with positive probability. This second condition implies that the utility of the sender in the optimal equilibrium without private information of the receiver is equal to the minimal utility from actions taken by the receiver under complete information.

When the sender-optimal equilibrium is not informationally robust, there may nonetheless exist other equilibria that are. In a natural follow-up  to the characterization of Theorem \ref{th:fb}, we proceed to characterize such informationally robust equilibria. First, in Theorem \ref{prop:quadruple}, we fully characterize the subset of tuples of actions that can be played in a robust cheap-talk equilibrium. We also show that a tuple of actions is informationally robust if and only if it is informationally robust when the receiver's (infinitesimal)  private information has binary support. 

Next, we turn to study the optimal utility that the sender can achieve under private information of the receiver. Theorem \ref{theorem:binary} provides a full characterization of the sender's maximal utility under infinitesimal private information with binary support. We show that this utility is a maximum of three expressions: (i) the utility under no information; (ii) the utility under an equilibrium in which one of the sender's messages reveals the state; and (iii) the value of a particular finite two-player zero-sum game that is based on the sender's utility function. In Theorem \ref{theorem:general} we then make use of the above to provide bounds for the sender's maximal utility under general infinitesimal private information of the receiver. Taken together, our results provide a comprehensive characterization of informationally robust cheap-talk communication.

It is worth noting that, unlike some results on equilibrium robustness in the cheap-talk model \citep[see, e.g.,][]{diehl2021non}, our results are not necessarily negative in nature. That is, the degree of vulnerability of the sender's maximal utility to the receiver's private information depends on the problem. As Theorem \ref{th:fb} suggests, in some cases, infinitesimal private information  will have no effect on this maximal utility. Furthermore, as our other theorems suggest, even when this maximal utility is not robust there may exist other non-trivial equilibria that are.

\subsection{Illustrative Examples}
We illustrate our results with an example adapted from \citet{lipnowski2020cheap}.

\begin{example}\label{ex:1} There are two states of the world, $0$ and $1$. A receiver (decision maker, policymaker, he) must decide whether to implement policy $P_0$, which he finds best in state $0$, policy $P_1$, which he finds best in state $1$, or no policy at all ($P_\emptyset$). Suppose the receiver will implement policy $P_\omega$ for $\omega\in\{0,1\}$ if he believes the probability of state $\omega$ is at least $0.6$. The receiver's initial belief is that the two states are equally likely, but there is a sender (expert, she) who knows the true state of the world. The sender has state-independent preferences over the receiver's decision---policy $P_0$ yields utility 3, policy $P_1$ yields utility 4, and policy $P_\emptyset$ yields utility 1, regardless of the realized state of the world. 
\end{example}

Observe that, without communication, the receiver will choose $P_\emptyset$, yielding the sender a utility of 1. Can the sender communicate with the receiver in a way that increases her expected utility in equilibrium?

\begin{figure}
    \centering
    \begin{subfigure}[b]{0.45\textwidth}
    \centering
    \begin{tikzpicture}[baseline = 3cm, scale = 0.3]

        \draw[] (0,1) -- (20,1);
        \draw[->] (0,1) -- (0,16) node[left] {};
        
         \node[] at (8,1) (3) {$|$};
        \node[] at (8,-0.5) (4) {$0.4$};
        \node[] at (10,-0.5) (5) {$\pi$};
        \node[] at (10,1) (5) {$|$};

          \node[] at (12,1) (5) {$|$};
        \node[] at (12,-0.5) (6) {$0.6$};
           \node[] at (20,1) (9) {$|$};
        \node[] at (20,-0.5) (10) {$1$};
         
\node[] at (10,-2) (8) {receiver's belief};
        
        \node[] at (0,2) (1) {$-$};
        \node[] at (-1,2) (2) {$0$};
         \node[] at (0,5) (1) {$-$};
        \node[] at (-1,5) (2) {$1$};        
        \node[] at (0,8) (1) {$-$};
        \node[] at (-1,8) (2) {$2$};        
        \node[] at (0,11) (1) {$-$};
        \node[] at (-1,11) (2) {$3$};
        \node[] at (0,14) (1) {$-$};
        \node[] at (-1,14) (2) {$4$};
        
        \draw[lightgray] (0,11) -- (8,11);
        \draw[lightgray] (8,5) -- (12,5);
        \draw[lightgray] (12,14) -- (20,14);


    \end{tikzpicture}
    \caption{Indirect utility}\label{fig:example1a}
    \end{subfigure}
    \hfill
      \begin{subfigure}[b]{0.45\textwidth}
    \centering
    \begin{tikzpicture}[baseline = 3cm, scale = 0.3]

        \draw[] (0,1) -- (20,1);
        \draw[->] (0,1) -- (0,16) node[left] {};
        
         \node[] at (8,1) (3) {$|$};
        \node[] at (8,-0.5) (4) {$0.4$};
        \node[] at (10,-0.5) (5) {$\pi$};
        \node[] at (10,1) (5) {$|$};

          \node[] at (12,1) (5) {$|$};
        \node[] at (12,-0.5) (6) {$0.6$};
           \node[] at (20,1) (9) {$|$};
        \node[] at (20,-0.5) (10) {$1$};
         
\node[] at (10,-2) (8) {receiver's belief};
        
        \node[] at (0,2) (1) {$-$};
        \node[] at (-1,2) (2) {$0$};
         \node[] at (0,5) (1) {$-$};
        \node[] at (-1,5) (2) {$1$};        
        \node[] at (0,8) (1) {$-$};
        \node[] at (-1,8) (2) {$2$};        
        \node[] at (0,11) (1) {$-$};
        \node[] at (-1,11) (2) {$3$};
        \node[] at (0,14) (1) {$-$};
        \node[] at (-1,14) (2) {$4$};
        
        \draw[lightgray] (0,11) -- (8,11);
        \draw[lightgray] (8,5) -- (12,5);
        \draw[lightgray] (12,14) -- (20,14);

       \draw[loosely dashed] (0,11) -- (12,11);
        \draw[loosely dashed] (12,14) -- (20,14);
        \draw[dotted] (0,11) -- (12,14);
        \draw[dotted] (12,14) -- (20,14);

    \end{tikzpicture}
    \caption{Envelopes}\label{fig:example1-envelopes}
    \end{subfigure}
    \caption{The solid gray line is the sender's indirect utility, the dashed line is its quasiconcave envelope, and the dotted line is its concave envelope.}\label{fig:example1}
\end{figure}
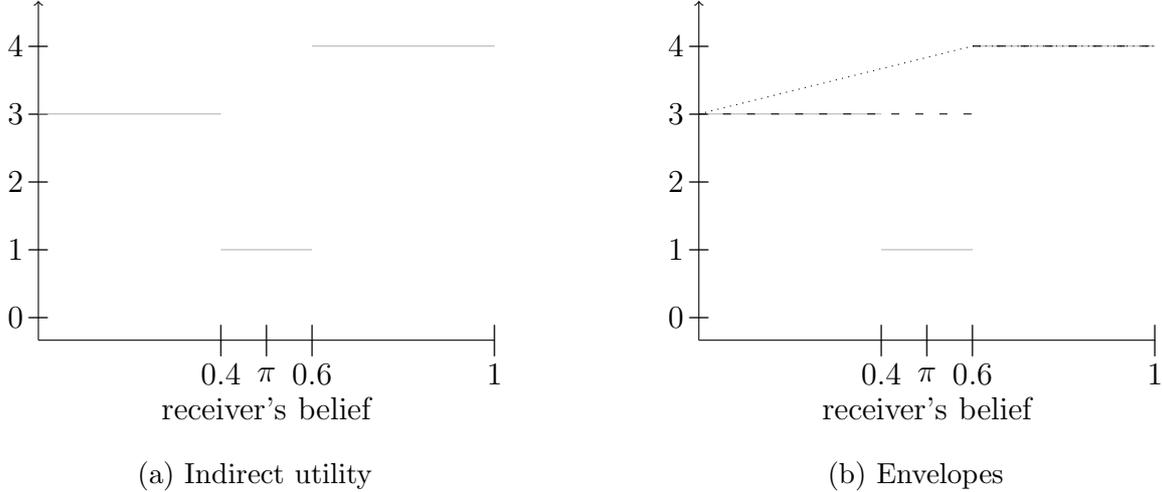

To characterize the gain from communication, consider the sender's indirect utility function---namely, the sender's utility at any  belief of the receiver, assuming that the receiver chooses an optimal action given that belief. The solid gray  line in Figure~\ref{fig:example1a} is the indirect utility in our example, where the x-axis is the receiver's belief that the state is $1$. As noted, at the prior $\pi=0.5$ the indirect utility is 1.

Now, if communication between the sender and the receiver takes the form of Bayesian persuasion---in which the sender commits to how she will communicate with the receiver before observing the realized state---then the maximal utility attainable by the receiver is the concave envelope of the indirect utility \citep{kamenica2011bayesian}. This is illustrated by the dotted line in Figure~\ref{fig:example1-envelopes}, and can be seen to equal about 3.8. If communication takes the form of cheap talk---in which the sender cannot commit, but rather sends a message after observing the state---then the maximal utility attainable by the receiver is the quasiconcave envelope \citep{lipnowski2020cheap}. This is illustrated by the dashed line in Figure~\ref{fig:example1-envelopes}, and can be seen to equal 3.

It is worth examining how the utility of 3 be sustained in an equilibrium with cheap talk. Suppose the sender sends one of two messages, $m_0$ or $m_1$. In state $1$, she always sends the message $m_1$. In state $0$ she mixes between $m_0$ and $m_1$ in such a way that the receiver's posterior upon receiving message $m_1$ is precisely 0.6. Thus, the sender's messages induce beliefs 0 and 0.6.
Furthermore, upon receiving message $m_1$, the receiver is indifferent between policies $P_1$ and $P_\emptyset$; suppose in this case he mixes, and chooses $P_1$ with probability $2/3$. This implies that the sender's expected utility upon sending message $m_1$ is 3, which is equal to her utility on sending message $m_0$. Since the sender is indifferent between the two messages, regardless of the realized state, this strategy profile forms an equilibrium.

Suppose now that the receiver obtains additional information through an information structure $F$. Suppose for simplicity that the information structure is binary and symmetric, yielding realized signals $s_0$ and $s_1$ with probabilities $P(s_0|\omega=0) = P(s_1|\omega=1)=q$, for some $q\in(1/2, 1)$. How does this affect the sender's equilibrium utility?

Observe first that, under Bayesian persuasion, such information can only harm the sender. This is because the sender's commitment is optimal, and if such additional information were beneficial then the sender would have committed to it already. However, the harm to the sender approaches 0 as $q\rightarrow 1/2$ and the information structure becomes uninformative. 

In contrast, under cheap talk, such private information is actually beneficial to the sender. To see this, consider the following equilibrium profile: In state $1$, the sender again always sends message $m_1$. In state $0$, the sender mixes, but this time in such a way that the induced belief on message $m_1$ {\em and} realized signal $s_0$ of the receiver is 0.6. Furthermore, on message $m_1$ and signal realization $s_0$, the receiver mixes between policy $P_1$ and no policy. He mixes in such a way that the sender's indirect utility on belief 0.6 is $x$, where $x$ is such that the expected utility conditional on state $0$ and message $m_1$ is equal to 3: $xq+4(1-q) = 3$. This implies that, in state $0$, the sender is indifferent between messages $m_0$ and $m_1$. Note also that, conditional on state $1$, the sender's utility is $x(1-q)+4q > 3$. This does not violate the sender's incentive constraints, since in state $1$ she always sends message $m_1$ (and so need not be indifferent between the two messages). Note, however, that the sender's unconditional utility is  strictly higher than 3: it is equal to 3 in state $0$, but $x(1-q)+4q > 3$ in state $1$. This is illustrated in Figure~\ref{fig:example1b}.

\begin{figure}
    \centering
    \begin{subfigure}[b]{0.45\textwidth}
    \centering
    \begin{tikzpicture}[baseline = 3cm, scale = 0.3]

        \draw[] (0,1) -- (20,1);
        \draw[->] (0,1) -- (0,16) node[left] {};
        
         \node[] at (8,1) (3) {$|$};
        \node[] at (8,-0.5) (4) {$0.4$};
        \node[] at (10,-0.5) (5) {$\pi$};
        \node[] at (10,1) (5) {$|$};

          \node[] at (12,1) (5) {$|$};
        \node[] at (12,-0.5) (6) {$0.6$};
           \node[] at (20,1) (9) {$|$};
        \node[] at (20,-0.5) (10) {$1$};
         
\node[] at (10,-2) (8) {receiver's belief};
        
        \node[] at (0,2) (1) {$-$};
        \node[] at (-1,2) (2) {$0$};
         \node[] at (0,5) (1) {$-$};
        \node[] at (-1,5) (2) {$1$};        
        \node[] at (0,8) (1) {$-$};
        \node[] at (-1,8) (2) {$2$};        
        \node[] at (0,9.5) (1) {$-$};
        \node[] at (-1,9.5) (2) {$x$};        

        \node[] at (0,11) (1) {$-$};
        \node[] at (-1,11) (2) {$3$};
        \node[] at (0,14) (1) {$-$};
        \node[] at (-1,14) (2) {$4$};
        
    \filldraw [] (0,11) circle (6pt);
        \draw[ lightgray] (0,11) -- (8,11);
        \draw[lightgray] (8,5) -- (12,5);
        \draw[lightgray] (12,14) -- (20,14);
            \filldraw [] (13,14) circle (6pt);
            \draw [->](13,14) .. controls (13.5,15) .. (14,14);
             \draw [->](13,14) .. controls (12.5,15) .. (12,14);
            \draw [] (12,9.5) circle (6pt);



    \end{tikzpicture}
    \caption{Example~\ref{ex:1}}\label{fig:example1b}
    \end{subfigure}
    \hfill
      \begin{subfigure}[b]{0.45\textwidth}
    \centering
    \begin{tikzpicture}[baseline = 3cm, scale = 0.3]

        \draw[] (0,1) -- (20,1);
        \draw[->] (0,1) -- (0,16) node[left] {};
        
         \node[] at (4,1) (1) {$|$};
         \node[] at (4,-0.5) (2) {$0.2$};
         \node[] at (8,1) (3) {$|$};
        \node[] at (8,-0.5) (4) {$0.4$};
        \node[] at (10,-0.5) (5) {$\pi$};
        \node[] at (10,1) (5) {$|$};

          \node[] at (12,1) (5) {$|$};
        \node[] at (12,-0.5) (6) {$0.6$};
           \node[] at (20,1) (9) {$|$};
        \node[] at (20,-0.5) (10) {$1$};
         
\node[] at (10,-2) (8) {receiver's belief};
        
        \node[] at (0,2) (1) {$-$};
        \node[] at (-1,2) (2) {$0$};
         \node[] at (0,5) (1) {$-$};
        \node[] at (-1,5) (2) {$1$};        
        \node[] at (0,8) (1) {$-$};
        \node[] at (-1,8) (2) {$2$};        
        \node[] at (0,11) (1) {$-$};
        \node[] at (-1,11) (2) {$3$};
        \node[] at (0,14) (1) {$-$};
        \node[] at (-1,14) (2) {$4$};
        
       \draw[lightgray] (0,2) -- (4,2);
        \draw[lightgray] (4,11) -- (8,11);
        \draw[lightgray] (8,5) -- (12,5);
        \draw[lightgray] (12,14) -- (20,14);
            \filldraw [] (5,11) circle (6pt);
            \draw [] (4,5) circle (6pt);
            \draw [->](5,11) .. controls (5.5,12) .. (6,11);
             \draw [->](5,11) .. controls (4.5,12) .. (4,11);

\filldraw [] (11,5) circle (6pt);
             \draw [] (12,11) circle (6pt);
             \draw [->](11,5) .. controls (11.5,6) .. (12,5);
              \draw [->](11,5) .. controls (10.5,6) .. (10,5);



    \end{tikzpicture}
    \caption{Example~\ref{ex:2}}\label{fig:example2b}
    \end{subfigure}
    \caption{The solid dots are the induced beliefs, and the hollow circles are the sender's expected utilities at these beliefs (given the receiver's mixing). The arrows illustrate the possible changes in beliefs caused by the receiver's private information.}\label{fig:example}
\end{figure}
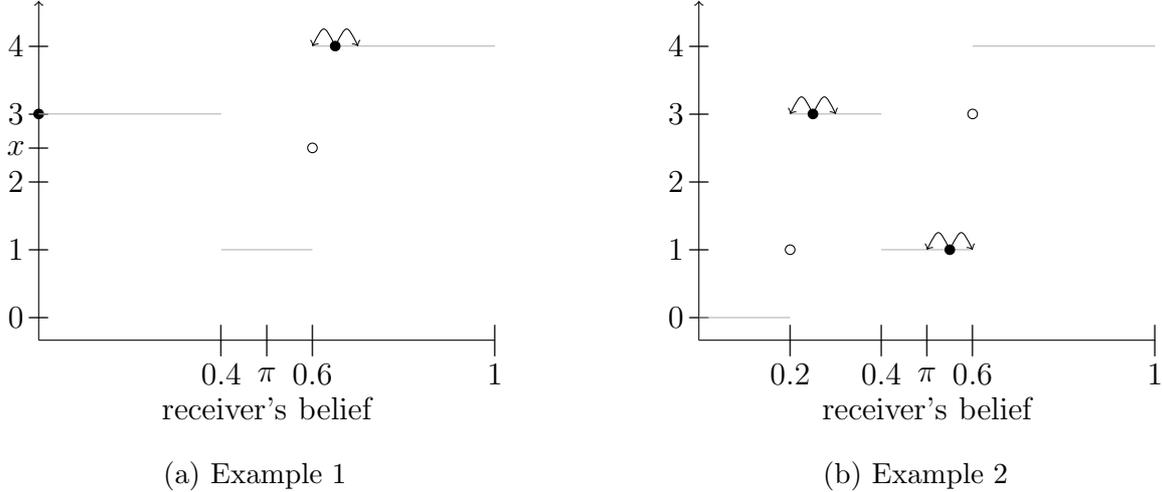

Surprisingly, then, under cheap talk, additional information to the receiver can be beneficial to the sender. Observe, however, that as the receiver's information structure becomes less informative---namely, as $q\rightarrow 1/2$---the sender's utility converges to 3, the same utility as without the receiver's signal. As we show in Theorem~\ref{th:fb}, this last statement is true for {\em any} vanishing information structure of the receiver. In this case we say that the utility of 3 is informationally robust. 

The difference between the no-private-information case and the infinitesimal-private-information case lies in the sender's incentive constraint. In the former,  when the incentive constraint is satisfied and the sender is indifferent between both messages, she is indifferent regardless of the state. In the latter, however, the sender's expected utility depends both on the message sent and on the receiver's signal. In Example~\ref{ex:1}, the expected utilities of the two messages are equal in state 0, and so in that state the sender can mix between them. In state 1, however, the sender is no longer indifferent, since the expected utility under message $m_1$ is higher. However, because here the sender always sends message $m_1$, she need not be indifferent between the messages.

Are all utilities of cheap-talk equilibria informationally robust? In this paper we show that they are not. Consider the following  modification to Example~\ref{ex:1}:
\begin{example}\label{ex:2}
Everything is as in Example~\ref{ex:1}, except that the receiver now has a third potential policy, $R_0$. This policy is best for the receiver only if he believes the state is $0$ with probability at least 0.8. The policy is worst for the sender, and yields her utility 0. 
\end{example}
The indirect utility function of Example~\ref{ex:2}, as well as the concave and quasiconcave envelopes, are illustrated in Figure~\ref{fig:example2}. Observe that the value of the quasiconcave envelope at the prior is still equal to 3; the addition of $R_0$ as an option does not affect the sender's attainable utility under cheap talk.

\begin{figure}
    \centering
    \begin{subfigure}[b]{0.45\textwidth}
    \centering
    \begin{tikzpicture}[baseline = 3cm, scale = 0.3]

        \draw[] (0,1) -- (20,1);
        \draw[->] (0,1) -- (0,16) node[left] {};
        
         \node[] at (4,1) (1) {$|$};
         \node[] at (4,-0.5) (2) {$0.2$};
         \node[] at (8,1) (3) {$|$};
        \node[] at (8,-0.5) (4) {$0.4$};
        \node[] at (10,-0.5) (5) {$\pi$};
        \node[] at (10,1) (5) {$|$};

          \node[] at (12,1) (5) {$|$};
        \node[] at (12,-0.5) (6) {$0.6$};
           \node[] at (20,1) (9) {$|$};
        \node[] at (20,-0.5) (10) {$1$};
         
\node[] at (10,-2) (8) {receiver's belief};
        
        \node[] at (0,2) (1) {$-$};
        \node[] at (-1,2) (2) {$0$};
         \node[] at (0,5) (1) {$-$};
        \node[] at (-1,5) (2) {$1$};        
        \node[] at (0,8) (1) {$-$};
        \node[] at (-1,8) (2) {$2$};        
        \node[] at (0,11) (1) {$-$};
        \node[] at (-1,11) (2) {$3$};
        \node[] at (0,14) (1) {$-$};
        \node[] at (-1,14) (2) {$4$};
        
       \draw[lightgray] (0,2) -- (4,2);
        \draw[lightgray] (4,11) -- (8,11);
        \draw[lightgray] (8,5) -- (12,5);
        \draw[lightgray] (12,14) -- (20,14);



    \end{tikzpicture}
    \caption{Indirect utility}\label{fig:example2a}
    \end{subfigure}
    \hfill
      \begin{subfigure}[b]{0.45\textwidth}
    \centering
    \begin{tikzpicture}[baseline = 3cm, scale = 0.3]

        \draw[] (0,1) -- (20,1);
        \draw[->] (0,1) -- (0,16) node[left] {};
        
         \node[] at (4,1) (1) {$|$};
         \node[] at (4,-0.5) (2) {$0.2$};
         \node[] at (8,1) (3) {$|$};
        \node[] at (8,-0.5) (4) {$0.4$};
        \node[] at (10,-0.5) (5) {$\pi$};
        \node[] at (10,1) (5) {$|$};

          \node[] at (12,1) (5) {$|$};
        \node[] at (12,-0.5) (6) {$0.6$};
           \node[] at (20,1) (9) {$|$};
        \node[] at (20,-0.5) (10) {$1$};
         
\node[] at (10,-2) (8) {receiver's belief};
        
        \node[] at (0,2) (1) {$-$};
        \node[] at (-1,2) (2) {$0$};
         \node[] at (0,5) (1) {$-$};
        \node[] at (-1,5) (2) {$1$};        
        \node[] at (0,8) (1) {$-$};
        \node[] at (-1,8) (2) {$2$};        
        \node[] at (0,11) (1) {$-$};
        \node[] at (-1,11) (2) {$3$};
        \node[] at (0,14) (1) {$-$};
        \node[] at (-1,14) (2) {$4$};
        
       \draw[lightgray] (0,2) -- (4,2);
        \draw[lightgray] (4,11) -- (8,11);
        \draw[lightgray] (8,5) -- (12,5);
        \draw[lightgray] (12,14) -- (20,14);

       \draw[loosely dashed] (0,2) -- (4,2);

      \draw[loosely dashed] (4,11) -- (12,11);
        \draw[loosely dashed] (12,14) -- (20,14);
         \draw[dotted] (0,2) -- (4,11);
        \draw[dotted] (4,11) -- (12,14);
        \draw[dotted] (12,14) -- (20,14);

    \end{tikzpicture}
    \caption{Envelopes}\label{fig:example2-envelopes}
    \end{subfigure}
    \caption{The solid lightgray line is the sender's indirect utility, the dashed line is its quasiconcave envelope, and the dotted line is its concave envelope.}\label{fig:example2}
\end{figure}
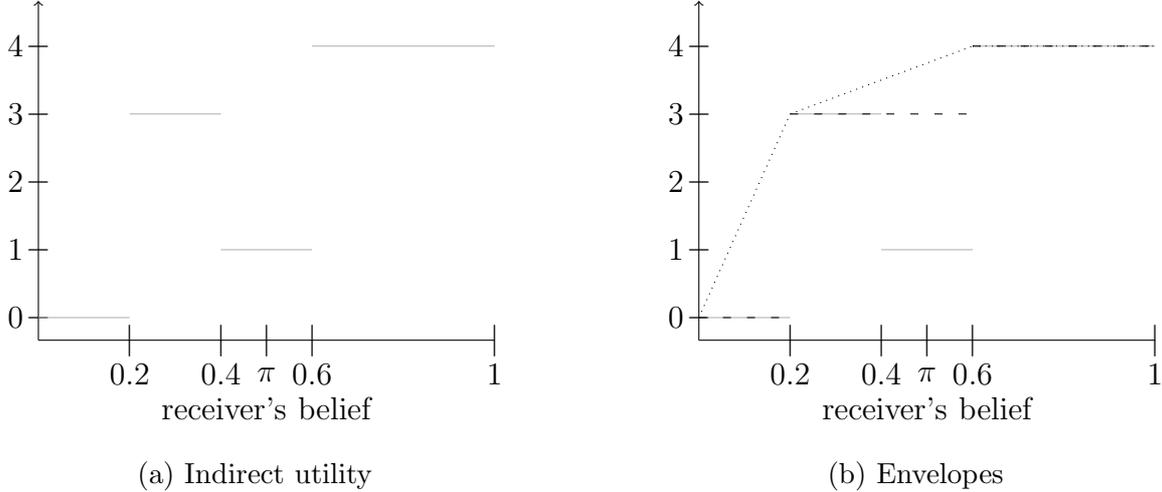

When the receiver has additional private information, however, this equilibrium utility can no longer be sustained.  In fact, Theorem~\ref{th:fb} provides a precise necessary and sufficient condition under which a cheap-talk equilibrium utility is informationally robust: namely, that this utility can be attained at the prior, or that it arises by a pair of sender messages such that at least one message induces either the belief 0 or the belief 1. In Example~\ref{ex:1}, message $m_0$ induces belief 0, and so this utility is informationally robust. In Example~\ref{ex:2}, however, the utility of 3 can only be sustained by beliefs that are interior. Thus, Theorem~\ref{th:fb} implies that the utility of 3 is not attainable: even if the receiver's information is only infinitesimally informative, the sender cannot get the same utility as when the information is completely uninformative.  The intuition is that, because neither induced belief is an endpoint, the sender must mix between both messages in both states, and must thus be indifferent between both messages in {\em both} states. In other words,  in equilibrium both of the sender's incentive constraints require indifference. Theorem~\ref{th:fb} shows that this dual indifference is impossible to attain for the quasiconcave envelope.

However, although the utility of 3 is unattainable, there are other equilibria where the dual indifference is feasible, and so other utilities that are informationally robust. Theorem~\ref{prop:quadruple} characterizes the actions that can be played in an informationally robust equilibrium, and Theorems~\ref{theorem:binary} and~\ref{theorem:general} characterize the utilities that are possible. In Example~\ref{ex:2} with a binary, symmetric information structure of the receiver, a sender-utility of 2 can be attained in an informationally robust equilibrium. This equilibrium is illustrated in Figure~\ref{fig:example2b}.  We defer discussion of the construction and underlying intuitions to Section~\ref{sec:eq-characterization}, as these will be more illuminating after we develop the necessary definitions.

\subsection{Related Literature}
Our paper is most closely related to that of \citet{lipnowski2020cheap}, who take a belief-based approach to study cheap-talk equilibria when the sender has state-independent preferences. A main result of \citet{lipnowski2020cheap} is that the attainable sender utilities equal the quasiconcave envelope of the indirect utility function.\footnote{See also \citet{chakraborty2007comparative,chakraborty2010persuasion}, who study cheap talk in multidimensional settings.} This contrasts with the analogous approach of Bayesian persuasion \citep{kamenica2011bayesian}, in which the attainable utilities are the concave envelope of the indirect utility function. Our baseline setting is identical to that of \citeauthor{lipnowski2020cheap}, except that we restrict our analysis to a binary state-space. We analyze the robustness of equilibria to infinitesimal receiver private-information, and characterize  conditions under which the utilities of the quasiconcave envelope are robustly attainable. When these utilities are not attainable, we characterize the equilibrium actions and utilities.

Our work is related to numerous papers that study cheap-talk models in which the receiver has some private information, including \citet{chen2009communication,chen2012value,de2010cheap,lai2014expert,ishida2016cheap,ishida2019cheap}. Most study variants of the Crawford-Sobel model, in which the state and action spaces are unit intervals, the receiver would like to match the state, and the sender would like to do the same but with some offset \citep{crawford1982strategic}. \citet{ishida2016cheap} is closer to our paper, since they consider a binary-state setting. The main insight in these papers is that, typically, the receiver's private information hinders information transmission, and that the more accurate the receiver's information the coarser the information transmitted in equilibrium. For example, \citet{ishida2016cheap} show that, if the receiver's information is sufficiently accurate, then there is no informative equilibrium even when the players' preferences are close.

There are two main differences between these papers on informed receivers and our work. Most significantly, we focus on infinitesimally informative receiver information, and show that even this has a substantial impact on equilibria. Second, our model departs from that of \citeauthor{crawford1982strategic}, and instead, we consider a sender with state-independent preferences \citep[as in][]{chakraborty2010persuasion,lipnowski2020cheap}.

Other papers look at different forms of communication robustness. \citet{diehl2021non} consider the multidimensional cheap-talk model of \citet{chakraborty2010persuasion}, and show that, when the receiver has Harsanyi-type uncertainty about the sender's utility function, there cannot be any informative communication in equilibrium.
\citet{dilme2022robust} studies a Crawford-Sobel model with a small communication cost, and shows that the only equilibria that are robust to this small cost are highly informative ones.

Finally, our paper has the same motivation as the literature on the robustness of equilibria to a small amount of incomplete information \citep{kajii1997robustness,ui2001robust,morris2005generalized}. This literature is concerned with the robustness of predictions in complete information games to a small amount of uncertainty  about higher-order beliefs. Our paper has an analogous flavor: We are concerned with the robustness of predictions in communication games to a small amount of uncertainty about the receiver's information.

\section{Model}
Consider a binary state-space $\Omega=\{0,1\}$ with a common prior $\pi=\p(\omega=1)$, and a finite set of receiver actions  $A$ with $|A|=\ell$. 
The receiver has a state-dependent utility function $u_R:A\times\Omega\to \real$. He is an expected utility maximizer, and so the set of beliefs for which each action is optimal is a segment. We henceforth assume that action $a_i$ is optimal for beliefs $\lambda\in I_i= [x_{i-1},x_i]$, where $0=x_0< x_1<x_2<\cdots<x_\ell=1$.

The sender has a state-independent utility function $u_S:A\to\real$. We assume that this utility function is {\em generic}, namely, $u_S(a)\neq u_S(b)$ for any two distinct actions $a,b\in A$. 
Define the 
indirect utility of the sender as a set mapping  $v:\Delta(\Omega)\twoheadrightarrow\real$ where, for each belief $\lambda$ about the probability that the state is $\omega=1$, the set $v(\lambda)\subset\real$ consists of all possible utilities for the sender, assuming the receiver best-replies to $\lambda$.

For any belief $\lambda$, let $lr(\lambda)=\log\left(\frac{\lambda }{1-\lambda }\right)$.  Thus, for $i=1,\ldots,\ell$, the intervals $I_i=[x_{i-1},x_i]$ in which action $a_i$ is optimal are translated to $J_i=[y_{i-1},y_i]$, where $y_i=lr(x_i)\in\real\cup\{-\infty,\infty\}$ for every $0\leq i\leq \ell$. For any such interval, let $|J_i|=y_i-y_{i-1}$, and note that $|J_1|=|J_\ell|=\infty$. All other intervals are of finite length.

The receiver obtains private information from an information structure $F=(S,F_0,F_1)$, where $S$ is some measurable space and $F_\omega\in\Delta(S)$ are two probability measures, one for each $\omega\in\Omega$. As usual, identify with each element $s\in S$ the induced posterior belief starting with a prior of $\frac{1}{2}$. Given some prior belief $\pi$, a receiver who receives a signal $s\in S$ will update to a posterior $\pi^s$, where
$lr(\pi^s) \equiv lr(\pi)+lr(s)$. 
We assume that $F_0$ and $F_1$ are mutually absolutely continuous with respect to each other, and thus no signal fully reveals the state.



The sender sends the receiver a message from some finite space $M$. 
In any equilibrium, each message $m\in M$ induces a posterior belief about the likelihood that $\omega=1$. We will henceforth identify the message $m$ with the posterior in $[0,1]$ of $\omega=1$ it generates. Our underlying assumption is that the sender does not observe the realized signal of the receiver. She can thus not base the chosen message on this realized signal, but only on the information structure $F$. Therefore, the sender's message and the receiver's private signals are conditionally independent given the state $\omega$. 

We next define the notion of a cheap-talk equilibrium in our setting with private information of the receiver. A sender's strategy is
a mapping $\sigma:\Omega\to\Delta(M)$ for some message space $M\subseteq[0,1]$. 
A receiver's strategy is a measurable mapping $\rho:M\times S\to\Delta(A)$.  The pair of strategies together with the information structure $F$ and the prior $\pi$ generates a probability distributions $p\in\Delta(\Omega\times M\times S)$. 
Without loss of generality, we assume that $p(\omega=1|m)=m$.

A pair $(\sigma,\rho)$ constitute a cheap-talk equilibrium if 
\begin{enumerate}
    \item The support of $\rho(m,s)$ is contained in the set 
    $\arg\max_{a\in A}p(\omega=1|m,s)u_R(a,1)+p(\omega=0|m,s)u_R(a,0)$.
    \item For every $m\in M$ and $\omega\in\Omega$ it holds that if $p(m|\omega)>0$, then 
    $\int_s u_S(\rho(m,s))\mathrm{d}F_\omega(s)\geq \int_s u_S(\rho(m',s))\mathrm{d}F_\omega(s)$ for every $m'$ that is sent with positive probability. 
\end{enumerate}
We say that the cheap-talk equilibrium is binary if $M$ consists of two elements. We note that by standard considerations the best sender equilibrium can be attained using a binary cheap-talk equilibrium. For this reason, in most of the paper, we will restrict attention to binary equilibria. In addition, we mostly suppress the dependence of the equilibrium on the receiver's strategy and identify a binary cheap-talk equilibrium with the two posteriors 
$m_L<\pi<m_H$ it induces.

Consider the case of a binary equilibrium $m_L<\pi<m_H$. We distinguish two cases, one where 
$m_L,m_H\in(0,1)$ and one where
$\{m_L,m_H\}\cap\{0,1\}\neq\emptyset$. In the first case, the two messages are sent with positive probability in both states $\omega\in\{0,1\}$. This implies that a necessary and sufficient condition for $(\sigma, \rho)$ to be a cheap-talk equilibrium is that for {\em both} states $\omega$, the conditional expected sender's utility given $\omega$ and $m_H$ equals her expected utility given $\omega$ and $m_L$.
Consider now the second case, and suppose $0=m_L<\pi<m_H<1$.\footnote{From our genericity assumption, the case where $m_L=0\text{ and }m_H=1$ can never hold in a cheap talk equilibrium. Also, the case $0<m_L<\pi<m_H=1$ is symmetric to the one under consideration.}  In this case, the two messages are sent with positive probability only in state $\omega=0$, whereas in state $\omega=1$ the message $m_H$ is sent with probability one. Therefore, the equilibrium condition asserts that the conditional expected sender's utility given $\omega=0$ and $m_H$ equals his expected utility given $\omega=0$ and $m_L$. In addition, in state $\omega=1$ the sender's utility from misreporting and sending $m_L$ is not higher than sending the message $m_H$ (but need not be equal to it). This simple observation will play a significant role in our analysis.

Denote the sender's maximal equilibrium utility under information structure $F$  and prior $\pi$ as $v^*_F(\pi)$. If $F$ is uninformative, denote that utility by $v^*_0(\pi)$.
By Lipnowski and Ravid (2020), the optimal value the sender can obtain in any equilibrium with an uninformative $F$ is the quasiconcave closure evaluated at the prior. If the prior is $\pi$, then this optimal value is equal to 
$$v^*_0(\pi) = \min \left\{\max_{\lambda\leq \pi}v(\lambda), \max_{\mu\geq \pi}v(\mu)\right\}.$$
If $v^*_0(\pi)=v(\pi)$, we say that $v^*_0(\pi)$ can be trivially supported.

In this paper, our main question is: What is the maximal 
 guaranteed utility for the sender under a cheap-talk equilibrium, subject to infinitesimal private information of the receiver? To formalize this question, for any $\delta>0$ let $\mathcal{F}_\delta$ be the set of all information structures with support contained in $[\frac 1 2-\delta,\frac 1 2+\delta]$. 
 For any prior $\pi\in[0,1]$ define the \emph{informationally robust equilibrium utility} for the sender at $\pi$ as 
$$\hat v(\pi)=\lim_{\delta\to 0}\inf_{F\in\mathcal{F}_\delta}v_F^*(\pi).$$ 
Our main questions are, does $\hat v(\pi)=v^*_0(\pi)$, or is there a utility discontinuity at the limit? And, in the latter case, what is $\hat v(\pi)$?

\section{Results}\label{sec:results}
We begin this section with our first result, a necessary and sufficient condition under which $\hat v(\pi)=v^*_0(\pi)$. In the subsequent subsections we then characterize informationally robust equilibria and utilities, thereby shedding light on the sender-optimal equilibria in cases where $v^*_0(\pi)$ is not informationally robust.

Throughout, we fix a prior $\pi$ that lies in the interior of some segment $I_j$, namely, $\pi\in \mathrm{int}(I_j)$, for some $j\in [1,\ell]$. At $\pi$, it either holds that 
$v^*_0(\pi)=v(\pi)$ or that $v^*_0(\pi)>v(\pi)$. In the former case,  $v^*_0(\pi)$ is trivially supported. In the latter case, by definition, $v^*_0(\pi)=u_S(a_i)$ for some $i\in[1,\ell]$ with $i\neq j$.  
Given this observation, we now state our first result:
\begin{theorem}\label{th:fb}
For every interval $I_j$ and prior $\pi\in \mathrm{int}(I_j)$ it holds that 
$\hat v(\pi)=v^*_0(\pi)$ if and only if either $v^*_0(\pi)$ is trivially supported  or $ v^*_0(\pi)=u_S(a_i)$ for $i=1$ or for $i=\ell$. 
\end{theorem}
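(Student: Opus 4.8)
My plan is to dispose of one inequality for free, reduce by symmetry, and then split on the structure of $v^*_0(\pi)$. Throughout I work with binary equilibria $m_L<\pi<m_H$ (as the paper permits) and with the conditional utilities $U_\omega(m):=\int_s u_S(\rho(m,s))\,\dd F_\omega(s)$. The uninformative structure belongs to every $\mathcal F_\delta$ and attains $v^*_0(\pi)$, so $\inf_{F\in\mathcal F_\delta}v^*_F(\pi)\le v^*_0(\pi)$ and hence $\hat v(\pi)\le v^*_0(\pi)$ unconditionally; the theorem thus reduces to characterizing when $\hat v(\pi)\ge v^*_0(\pi)$. Using the symmetry $\omega\mapsto 1-\omega,\ \lambda\mapsto 1-\lambda$ (which relabels $a_k\mapsto a_{\ell+1-k}$ and swaps $i=1$ with $i=\ell$), I assume the binding side is the left one, so $v^*_0(\pi)=V^-:=\max_{\lambda\le\pi}v(\lambda)=u_S(a_i)$ with $i<j$, and non-triviality means $u_S(a_i)>u_S(a_j)=v(\pi)$. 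The computational engine is the log-likelihood representation: since the posterior after $m$ and $s$ has LLR $lr(m)+lr(s)$, a message whose LLR $\mu$ lies within the signal support of a single boundary $y_k$ satisfies $U_\omega(m)=u_S(a_{k+1})+\bigl(u_S(a_k)-u_S(a_{k+1})\bigr)G_\omega(y_k-\mu)$, where $G_\omega$ is the c.d.f.\ of $lr(s)$ under $F_\omega$. Because $\dd F_1=e^{t}\,\dd F_0$ has monotone likelihood ratio, $F_1$ first-order stochastically dominates $F_0$, so $G_1\le G_0$; moreover $\int(e^{t}-1)\,\dd F_0=0$ forces every informative $F$ to put mass on both sides of $0$.

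For sufficiency I construct, for every small $F$, an equilibrium of value at least $v^*_0(\pi)$. If $v^*_0(\pi)=v(\pi)$, babbling works: for small $\delta$ the posterior $lr(\pi)+lr(s)$ stays in $\mathrm{int}(J_j)$, the receiver plays $a_j$, and the value is $v(\pi)$. In the extreme case $i=1$ I take $m_L=0$, which reveals state $0$ and pins $U_0(m_L)=U_1(m_L)=u_S(a_1)$, and I place $m_H$ to straddle the first boundary $y_k$ with $k\ge j$ at which the indirect utility rises above $u_S(a_1)$; since $u_S(a_j)<u_S(a_1)\le V^+$, such a boundary exists and automatically has $u_S(a_k)<u_S(a_1)<u_S(a_{k+1})$. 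Tuning $\mu_H$ (and, at atoms, the receiver's tie-break) gives $U_0(m_H)=u_S(a_1)$, so state $0$ is indifferent, while FOSD yields $U_1(m_H)-U_0(m_H)=\bigl(u_S(a_k)-u_S(a_{k+1})\bigr)\bigl(G_1-G_0\bigr)\ge 0$, so the state-$1$ constraint holds and the value $(1-\pi)u_S(a_1)+\pi U_1(m_H)$ is at least $u_S(a_1)=v^*_0(\pi)$. As this holds for all $F\in\mathcal F_\delta$, we get $\hat v(\pi)\ge v^*_0(\pi)$, and hence equality.

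For necessity I fix $1<i<j$ and exhibit, for each small $\delta$, a \emph{binary} $F\in\mathcal F_\delta$ under which every equilibrium has value at most $u_S(a_i)-c$ for a fixed $c>0$. The crux is that a generic binary $F$ has no interior equilibrium. If $m_L,m_H\in(0,1)$, both are sent in both states, so dual indifference $U_0(m_L)=U_0(m_H)$ and $U_1(m_L)=U_1(m_H)$ is forced. For small $\delta$ each message straddles at most one boundary; the main case is that $m_L$ straddles some $y_a$ with $a\le j-1$ and $m_H$ some $y_b$ with $b\ge j$, necessarily distinct. Writing the binary structure as $P_\omega(s_\pm)$ with $P_0(s_+)=\gamma_0$ and $P_1(s_\pm)=e^{t_\pm}P_0(s_\pm)$, and setting $X=u_S(a_a)-u_S(a_b)$, $Y=u_S(a_{a+1})-u_S(a_{b+1})$, the two indifference equations become $(1-\gamma_0)X+\gamma_0 Y=0$ and $e^{t_-}(1-\gamma_0)X+e^{t_+}\gamma_0 Y=0$, whose determinant $(1-\gamma_0)\gamma_0(e^{t_+}-e^{t_-})\ne 0$ forces $X=Y=0$, contradicting $u_S(a_a)\ne u_S(a_b)$. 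The degenerate sub-cases are excluded too: two deep-interior messages would need distinct actions with equal utility, while one deep-interior and one straddling message forces $U_0(m_H)=U_1(m_H)$, a single equation in $(t_-,t_+,\gamma_0)$ that generic $F$ avoids at each of the finitely many boundaries. The surviving equilibria are then bounded below $u_S(a_i)$: babbling gives $u_S(a_j)$; any equilibrium with $m_L=0$ forces $U_0(m_H)=u_S(a_1)$ and hence value $u_S(a_1)+\pi\bigl(U_1(m_H)-U_0(m_H)\bigr)=u_S(a_1)+O(\delta)$; any equilibrium with $m_H=1$ forces $U_0(m_L)=u_S(a_\ell)$ and hence value $u_S(a_\ell)$, and is infeasible when $u_S(a_\ell)>V^-$. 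Since $u_S(a_1),u_S(a_j)$, and (when feasible) $u_S(a_\ell)$ are all strictly below $u_S(a_i)=V^-$, every equilibrium value is at most $u_S(a_i)-c$, so $\hat v(\pi)\le u_S(a_i)-c<v^*_0(\pi)$.

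I expect the necessity direction to be the main obstacle, for two reasons. First, the linear-algebra obstruction to dual indifference is clean only after one pins down the \emph{geometry}: that for a fixed interior prior and shrinking support the two messages straddle distinct boundaries on opposite sides of $\pi$ (ruling out a shared-boundary coincidence, which would otherwise satisfy both indifference conditions trivially), and that $F$ can be chosen to miss the finitely many nongeneric equalities $U_0=U_1$. Second, one must cap \emph{all} surviving equilibria uniformly in $\delta$; this is where the hypothesis $i\notin\{1,\ell\}$ does its work, since revealing a state locks the value near the extreme utilities $u_S(a_1)$ or $u_S(a_\ell)$, both strictly below the interior maximizer $u_S(a_i)$. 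The exact-feasibility bookkeeping for the straddle weights (handled with receiver mixing at atoms) is routine but must be carried along to make the constructions genuine equilibria.
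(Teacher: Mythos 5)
Your unconditional upper bound $\hat v(\pi)\le v^*_0(\pi)$ and your sufficiency construction are correct and essentially the paper's own (the endpoint equilibrium with $m_L=0$, state-$0$ indifference by tuning the straddle, and FOSD handling the state-$1$ constraint is exactly Lemma~\ref{lem:endpointExistence}). The necessity direction, however, has a genuine gap: its crux---``a generic binary $F$ has no interior equilibrium''---is false, and in fact contradicts the paper's Theorem~\ref{prop:quadruple} (via Lemma~\ref{lem:inermExistence}): whenever $Q\neq\emptyset$, \emph{every} sufficiently uninformative information structure, binary ones included, supports an interior equilibrium on a quadruple in $Q$. Example~\ref{ex:2} (Figure~\ref{fig:example2b}) is precisely such an instance, and it is also an instance to which necessity must apply. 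The flaw is in the last step of your determinant argument. The determinant computation itself is sound and forces $X=Y=0$, i.e., equality of the signal-conditional utilities across the two messages---this is exactly the paper's Lemma~\ref{lem:ind}. But you then identify $X$ and $Y$ with differences of \emph{pure} action utilities, $X=u_S(a_a)-u_S(a_b)$ and $Y=u_S(a_{a+1})-u_S(a_{b+1})$, which presumes the receiver plays a pure action after every (message, signal) pair. This ignores receiver mixing when his posterior lands exactly on a boundary $y_a$ or $y_b$---the very tie-breaking freedom you invoke in your own sufficiency construction. With such an atom, $r_{l,k}=r_{h,k}$ can hold with one side a mixture, and no genericity contradiction arises: in Figure~\ref{fig:example2b}, $r_{l,0}=r_{h,0}=1$ and $r_{l,1}=r_{h,1}=3$, with the receiver mixing at beliefs $0.2$ and $0.6$.

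Because interior equilibria do exist, your case analysis does not cover all equilibria, and the final uniform bound ``every equilibrium value is at most $u_S(a_i)-c$'' is unsupported as written. The theorem's conclusion is still salvageable---the values of interior equilibria are convex combinations of the two ``middle'' utilities of a quadruple in $Q$ (this is the content of Lemma~\ref{lem:utility} and Corollary~\ref{cor:au}), each strictly below $u_S(a_i)=v^*_0(\pi)$---but establishing that characterization, with mixing, is real work your proof does not contain. Note how the paper's necessity argument sidesteps any full characterization: assuming a sequence of equilibria with values converging to $u_S(a_i)$ under binary symmetric $F^n$, it shows all four signal-conditional utilities converge to $u_S(a_i)$, and then uses only the fact that at least one of the low-message signal-conditional utilities is a \emph{pure} action utility $u_S(a_k)$ with $k\neq i$ (since the two posteriors induced by $m_l$ under the two signals are distinct, at most one can sit on a boundary); genericity then yields the contradiction. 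That pure-utility observation is precisely the step that is robust to receiver mixing, and it is the step your argument is missing.
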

If $v^*_0(\pi)$ is trivially supported then the sender can attain her optimal utility without sending any message. This equilibrium is clearly informationally robust. In the more interesting case in which communication is necessary, 
Theorem~\ref{th:fb} asserts that, in order for the optimal sender-utility to be informationally robust, the sender-optimal equilibrium with no private information must fully reveal one of the states with positive probability. This implies that the sender's optimal utility $v^*_0(\pi)$ must be equal to the smaller of $u_S(a_1)$ and $u_S(a_\ell)$. Note that this is exactly what happens in Example~\ref{ex:1}, where message $m_0$ of the sender reveals that the state is $\omega=0$ to the receiver, leading to a sender-utility of 3.  Observe also that this does not happen in Example~\ref{ex:2}, and there the utility of 3 is not equal to either $u_S(a_1)$ or $u_S(a_\ell)$.

The main intuition underlying Theorem~\ref{th:fb} is the following. The sufficiency of the condition is straightforward, and follows the logic of Example~\ref{ex:1}. 
Suppose, as in that example, that message $m_0=0$, that is, that it fully reveals that the state is $\omega=0$. In that state, the sender mixes between her two messages in such a way that her expected utility given message $m_1$ is equal to $u_S(a_1)$, where the expectation is over the randomness of the receiver's private information as well as the receiver's  mixing over actions. In the proof of Theorem~\ref{th:fb} we show that, under the condition in the theorem, such mixing by the sender and receiver is always possible.

To prove the necessity of the condition in Theorem~\ref{th:fb}, we consider the simple binary, symmetric information structure. Suppose that, without private information, the sender-optimal equilibrium is supported on $0<m_L<\pi<m_H<1$. This implies that the sender must mix between both messages, in both states of the world. 
Her incentive constraints are thus that
$$\int_s u_S(\rho(m_L,s))\mathrm{d}F_\omega(s)= \int_s u_S(\rho(m_H,s))\mathrm{d}F_\omega(s)$$
for {\em each} $\omega\in\{0,1\}$.
In the proof of Theorem~\ref{th:fb} we show that it is impossible to simultaneously satisfy both of these, even for the binary, symmetric information structure, and even when the accuracy of the information provided by that information structure is infinitesimal.

A family of examples in which the sender-optimal utility is informationally robust is where the indirect utility $v:[0,1]\to\real$ changes its trend (from increasing to decreasing or vice versa) at most once. 
If $v$ first increases and then decreases, then the sender-optimal equilibrium is trivially supported. If $v$ first decreases and then increases, then $v^*_0(\pi)=\hat v(\pi)=\min\{u_S(a_1),u_S(a_\ell)\}$.

If the conditions in Theorem~\ref{th:fb} are not satisfied, the sender can no longer attain the maximal, no-private-information utility $v_0^*(\pi)$. However, this does not mean that communication is useless.
In the following subsections we characterize the informationally robust equilibria and utilities, encompassing cases in which 
$\hat v(\pi)<v^*_0(\pi)$, and analyze the extent to which communication can benefit the receiver.

\subsection{Robust Equilibrium Characterization}\label{sec:eq-characterization}
In this section, we characterize the structure of informationally robust equilibria by providing a necessary and sufficient condition for a particular action tuple to be played in such an equilibrium. 

Observe first that, in any cheap-talk equilibrium  with infinitesimal private information, the number of distinct actions that can be played by the receiver in that equilibrium is either one, three, or four. To see this, suppose the sender's messages in equilibrium are $m_L$ and $m_H$, and the prior $\pi\in\mathrm{int}(I_j)$. If $m_L, m_H\in\mathrm{int}(I_j)$ then the equilibrium is trivially supported, and only one action is played in equilibrium.  On the other hand, if $m_L\in I_i$ and $m_H\in I_k$, where $i<k$, then either three or four actions are played: If $m_L$ lies in the interior of $I_i$, then only action $a_i$ can be played in equilibrium following message $m_L$. If $m_L$ lies on the edge of $I_i$ rather then the interior, then the actions played are either $a_i$ and $a_{i+1}$ or $a_{i-1}$ and $a_i$, depending on which edge. The same holds analogously for $m_H$. But note that it cannot be the case that both $m_L$ and $m_H$ lie in the interior of their respective segments, due to our genericity assumption that $u_S(a_i)\neq u_S(a_k)$.

Thus, when an equilibrium is not trivially supported, the action tuples that can be played are either quadruples or triples. However, not all such tuples can be played in an informationally robust equilibrium. In particular, only tuples that belong to one of two sets, $Q$ and $T$, can be played, and we now turn to define these sets.
 
 We first define a set $Q$ of action quadruples. Recall that $\pi \in \mathrm{int}(I_j)$ for some $j$. For $i\leq j-1$ and $k\geq j$ for which $i+1<k$, let $Q$ be the subset of quadruples of actions $\bar a=(a_i,a_{i+1},a_k,a_{k+1})\in A^4$ that satisfy one of the four conditions below:
\begin{enumerate}
    \item $u_S(a_i)<u_S(a_k)<u_S(a_{i+1})<u_S(a_{k+1})$.
\item $u_S(a_{k+1})<u_S(a_{i+1})<u_S(a_k)<u_S(a_i)$.
\item $u_S(a_k)<u_S(a_i)<u_S(a_{k+1})<u_S(a_{i+1})$.
\item $u_S(a_{i+1})<u_S(a_{k+1})<u_S(a_i)<u_S(a_k)$.
\end{enumerate}
The four conditions describe different orderings of sender-utilities. For example, the sender's utilities from the four actions in Example~\ref{ex:2} satisfy condition 1 above. Condition 2 is a mirror image of condition 1. Condition 3 is similar to condition 1, except that the first two actions are swapped with the latter two actions. Finally, condition 4 is a mirror image of condition 3.

Observe that common to these conditions is that the sender-utility from one of the two first actions of $\overline a \in Q$ lies between the utilities from the latter two actions (and vice versa). Thus, these orderings capture only a fraction of all possible orderings---in particular, they exclude orderings where the two first actions yield higher or lower utilities than the latter two actions, or orderings where the first two actions yield the maximal and minimal utilities. Note also that in some cases $Q=\emptyset$. This happens for example when the number of actions $\ell\leq 3$, or when the indirect utility changes its trend at most once.

Now, given a cheap-talk equilibrium, say that it is {\em  supported in the interior} if there are four distinct actions that are played with positive probability by the receiver and that, when ordered according to their indices, these four actions are in $Q$.  

We next define a set $T$ of action triples.
For $1<i<n$, consider the set of all action triples of the following form:
\begin{enumerate}
    \item $(a_1,a_i,a_{i+1})$, where $1<j\leq i$ and $u_S(a_i)<u_S(a_1)<u_S(a_{i+1})$.
    \item $(a_{i-1},a_i,a_\ell)$, where $i\leq j<n$ and $u_S(a_{i+1})<u_S(a_1)<u_S(a_{i})$.
\end{enumerate}
Let $T$ be the set of all such triples. Given a cheap-talk equilibrium, say that it is {\em supported by an endpoint} if three actions are played with positive probability by the receiver, and that, when ordered according to their indices, these three actions lie in $T$.

Finally, say that a tuple of actions $\overline{a}$ is {\em robustly supported} if there exists a $\delta_0>0$ such that, for every $\delta<\delta_0$ and every $F\in \mathcal{F}_\delta$, there exists a  cheap-talk equilibrium where the actions that are played with positive probability are $\overline{a}$. Given these definitions, we can state our second result:
\begin{theorem}\label{prop:quadruple}
A non-singleton tuple of actions $\overline{a}$ is robustly supported if and only if either $\overline{a}\in T$ or $\overline{a}\in Q.$
\end{theorem}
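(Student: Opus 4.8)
The plan is to reduce everything to binary equilibria, as already justified in the text, and to analyze the sender's incentive constraints in the log-likelihood coordinates $lr(\cdot)$. The central fact I would exploit is this: if a message $m$ is induced with $lr(m)$ placed at (or infinitesimally off) an edge $y_i=lr(x_i)$, then after the receiver's signal $s$ the posterior has log-likelihood $lr(m)+lr(s)$, so the receiver plays the lower action $a_i$ when $lr(s)$ is below the threshold $\tau=y_i-lr(m)$ and the higher action $a_{i+1}$ when it is above (randomizing exactly at the threshold). Hence the sender's utility conditional on $m$ and state $\omega$ equals $u_S(a_i)+(u_S(a_{i+1})-u_S(a_i))\,G_\omega(\tau)$, where $G_\omega(\tau)=F_\omega(\{lr(s)>\tau\})$, and the two states are tied by the likelihood-ratio identity $\mathrm{d}F_1=e^{lr(s)}\,\mathrm{d}F_0$, which forces $G_1(\tau)>G_0(\tau)$ uniformly. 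A quadruple uses two interior messages, so both states mix and there are two indifference equations; a triple uses one endpoint message (revealing a state) plus one interior message, so there is one indifference equation in the revealed state together with a one-sided incentive inequality in the other.

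For necessity I would test the tuple against the binary symmetric structure, which lies in every $\mathcal{F}_\delta$. Writing $A=u_S(a_i)$, $B=u_S(a_{i+1})$, $C=u_S(a_k)$, $D=u_S(a_{k+1})$ and letting the receiver randomize at the relevant boundary signal, the pair of indifference conditions becomes a $2\times2$ system whose state-$0$ and state-$1$ rows differ only through the factors $q$ and $1-q$. Subtracting the rows shows that any configuration in which both messages react to the signal in the \emph{same} direction forces $A=C$ and $B=D$, contradicting genericity; the only surviving configurations are those in which the messages react in \emph{opposite} directions, where the mixing weights are pinned to the explicit ratios $(B-C)/(B-A),\ (B-C)/(D-C)$ or $(D-A)/(B-A),\ (D-A)/(D-C)$. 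Requiring these to lie in $(0,1)$ reproduces exactly the four sign orderings defining $Q$; the analogous computation for a revealing message, where the single free weight must solve one equation while the remaining state must satisfy the inequality, reproduces the orderings defining $T$. Thus any tuple outside $Q\cup T$ admits no equilibrium under the binary symmetric structure for arbitrarily small $\delta$, and so is not robustly supported.

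For sufficiency I would reverse this computation and, for a given $\bar a\in Q$ (respectively $\bar a\in T$), construct an equilibrium. Fixing the configuration selected by the sign ordering, for a binary structure I would treat the two offsets (and, if needed, the boundary-randomization probabilities) as the unknowns and the indifference equations as the target: the maps $\tau\mapsto G_\omega(\tau)$ are continuous and monotone and, as $\tau$ ranges over the support of $lr(s)$, sweep $[0,1]$ in each state while maintaining $G_1>G_0$, so a continuity/intermediate-value argument yields offsets in the admissible range solving the equations. The sender's mixing probabilities over the two messages are then recovered from Bayes-consistency with the prior $\pi$ (so that $p(\omega=1\mid m)=m$), and one verifies that all message and action probabilities are strictly positive and, in the triple case, that the one-sided incentive inequality in the non-revealed state holds automatically because $G_1>G_0$ pushes that state's conditional utility in the required direction.

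The step I expect to be the genuine obstacle is extending sufficiency to \emph{arbitrary}, possibly non-binary and asymmetric, $F$: the weights at the two messages are coupled through the single shared information structure via the likelihood-ratio identity, so the two indifference equations must be solved simultaneously and uniformly over all $F\in\mathcal{F}_\delta$, and for general $F$ the system need not decouple as cleanly as in the symmetric case. The lever I would rely on is the reduction announced in the introduction---that a tuple is robustly supported if and only if it is robustly supported under binary-support information structures---which collapses the infinite-dimensional family of $F$'s to the finite-dimensional case handled above. Establishing that reduction, namely that binary-support structures are the worst case for sustaining a given tuple, is where the real work lies; I would approach it by showing that any $F$ can be replaced, without destroying the relevant equilibrium, by the two-point structure matching its conditional signal masses on the two sides of each message's threshold.
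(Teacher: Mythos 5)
Your necessity argument and your sufficiency argument for \emph{binary} information structures track the paper's own proof closely: the paper also tests candidate tuples against small binary structures, and its Lemma~\ref{lem:ind} is exactly your ``subtract the two rows'' step (the two state-indifference equations differ only through the conditional signal weights, so they can hold jointly only if the conditional utilities agree signal-by-signal, which genericity then converts into the orderings defining $Q$); the endpoint/triple case, with one indifference equation plus a one-sided constraint handled by first-order stochastic dominance, likewise matches the paper's treatment (Lemma~\ref{lem:endpointExistence} and the last part of the proof of Theorem~\ref{prop:quadruple}). Your handling of atoms via boundary-randomization parameters is also the right instinct; the paper's Lemma~\ref{lem:ivt} formalizes precisely this.

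The genuine gap is the one you yourself flagged: sufficiency for \emph{general} $F\in\mathcal{F}_\delta$ when $\overline{a}\in Q$. Robust support requires an equilibrium for \emph{every} such $F$, and your proposed lever---the statement that a tuple is robustly supported if and only if it is robustly supported under binary-support structures---cannot be used here, because in the paper that statement is a \emph{corollary} of Theorem~\ref{prop:quadruple} (both properties are equivalent to membership in $T\cup Q$), not an independent lemma; invoking it is circular. Your sketch for proving the reduction directly also fails: the two messages have different thresholds (near $y_i$ and near $y_k$), so a single two-point structure would have to match four conditional masses $G_0(\tau_l),G_1(\tau_l),G_0(\tau_h),G_1(\tau_h)$ that are moreover linked by the likelihood-ratio identity, which is over-determined; and even if it matched, an equilibrium of the surrogate structure is not an equilibrium of the original $F$, since the incentive constraints are integrals against the actual $F_\omega$. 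What the paper does instead (Lemma~\ref{lem:two-equalities}) is solve the two coupled indifference equations directly for arbitrary $G_0,G_1$: parameterize the low message by its state-$0$ utility level $p$ (via Lemma~\ref{lem:ivt}, which absorbs atoms into the mixing parameter), let $f_1(p)$ be the induced state-$1$ utility, do the same for the high message with maps $h_0,h_1$, and apply the intermediate value theorem to
$$d(p)=h_1^{-1}\bigl(f_1(p)\bigr)-h_0^{-1}\bigl(f_0(p)\bigr),$$
showing $d(0)\leq 0$ and $d(c)>0$ from first-order stochastic dominance; a zero of $d$ yields message positions and mixing probabilities satisfying both state-indifference conditions simultaneously. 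This one-dimensional reduction of the two-equation system is the missing idea in your proposal, and without it (or a correct substitute) the ``if'' direction of the theorem is not established.
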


Action triples $T$ essentially cover those equilibria in which one of the sender's messages reveals the state to the receiver. The intuition for why such triples are robustly supported is identical to the sufficiency condition of Theorem~\ref{th:fb}.

Action quadruples $Q$ are a bit more subtle, so let us illustrate this case of Theorem~\ref{prop:quadruple} using Example~\ref{ex:2}. Suppose for simplicity that the information structure is binary and symmetric, and observe that the four policies $(R_0, P_0, P_\emptyset, P_1)$ belong to the set $Q$. How can an informationally robust equilibrium be sustained?

Consider the following strategy profile: the sender sends messages $m_0$ and $m_1$ such that the induced belief on message $m_0$ and signal realization $s_0$ is 0.2, and the induced belief on message $m_1$ and signal realization $s_1$ is 0.6. Furthermore, on belief 0.2 the receiver mixes so that the sender's utility is 1, and on belief 0.6 the receiver mixes so that the sender's utility is 3. This implies that, regardless of the message sent by the sender, if the receiver's signal realization is $s_0$ then the sender's utility is 1, and if the receiver's signal realization is $s_1$ then the sender's utility is 3. Thus, regardless of the state, the sender is indifferent between sending messages $m_0$ and $m_1$. This constitutes an equilibrium, and is illustrated in Figure~\ref{fig:example2b}.  Furthermore,  Theorem~\ref{prop:quadruple} implies that this kind of equilibrium exists for any vanishing information structure of the receiver.




\subsection{Optimal Sender-Utility in Binary Information Structures}\label{sec:binary-info-structure}
In the previous section we characterized the sets of actions that can be played in an informationally robust equilibrium. But which equilibrium leads to the sender's optimal utility, and, moreover, what is the optimal utility?

In this section we fully characterize the sender's optimal utilities for the specific case of binary information structures. In Section~\ref{sec:general-info-structure} below we then use this characterization to provide bounds for the sender's optimal utilities under general information structures.

Recall that the informationally robust equilibrium utility for the sender at $\pi$ is 
$$\hat v(\pi)=\lim_{\delta\to 0}\inf_{F\in\mathcal{F}_\delta}v_F^*(\pi).$$ 
In this section we will limit our attention to receiver's information structures with binary supports, and so we define the \emph{binary informationally robust equilibrium utility} for the sender at $\pi$ as
$$\hat v_b(\pi)=\lim_{\delta\to 0}\inf_{F\in\mathcal{B}_\delta}v_F^*(\pi),$$
where $\mathcal{B}_\delta\subset \mathcal{F}_\delta$ is the set of all information structures with binary support. 

We now provide a full characterization of $\hat{v}_b(\pi)$. 
Consider the case where $Q\neq\emptyset$, and suppose that $Q=\{\bar a_1,\ldots,\bar a_q\}$. Define a matrix $B$ with $q$ rows and $2$ columns as follows:
for every $l\in[q]$ if $\bar a_l$ is of type $1$ or $2$ above, then let $B_{l,1}=u_S(a_k)$ and $B_{l,2}=u_S(a_{i+1})$. If $\bar a_l$ is of type $3$ or $4$ then let $B_{l,1}=u_S(a_{i})$ and $B_{l,2}=u_S(a_{k+1})$. Let $Val(B)$ be the value of the zero-sum game that is defined by $B$, where the row player is the maximizer and the column player is the minimizer. 

Next, say that $u_S(a_1)$ (resp., $u_S(a_\ell)$) is {\em achievable} if there exists $\overline a\in T$ of type 1 (resp., type 2) above. 
  For $i\in\{1,\ell\}$ let $v^*_i=u_S(a_i)$ if $a_i$ is achievable, and $v^*_i=-\infty$ otherwise.
Our characterization of $\hat v_b$ is then the following:
\begin{theorem}\label{theorem:binary}
$\hat{v}_b(\pi)$ equals the maximum of $v^*_1,v^*_\ell$, $Val(B)$, and $v(\pi)=u_S(a_j)$.     
\end{theorem}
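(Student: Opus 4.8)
The plan is to prove the two inequalities $\hat v_b(\pi)\ge \max\{v^*_1,v^*_\ell,Val(B),u_S(a_j)\}$ and $\hat v_b(\pi)\le \max\{v^*_1,v^*_\ell,Val(B),u_S(a_j)\}$ separately, after a common reduction. First I would fix a binary $F\in\mathcal B_\delta$ with signals $s_0,s_1$ inducing posteriors $b_0<\tfrac12<b_1$ and unconditional masses $P(s_0),P(s_1)$, and (using that the optimal equilibrium is binary) consider two messages $m_L<\pi<m_H$, giving four cells with receiver-induced sender utilities $u_{LL},u_{LH},u_{HL},u_{HH}$. The organizing structural fact, obtained from the two indifference constraints (one per state) written as a $2\times 2$ system whose coefficient matrix $\bigl(\begin{smallmatrix}P(s_0\mid 0)&P(s_1\mid 0)\\P(s_0\mid 1)&P(s_1\mid 1)\end{smallmatrix}\bigr)$ is nonsingular whenever $F$ is informative, is that in any non-trivial equilibrium $u_{LL}=u_{HL}=:t_0$ and $u_{LH}=u_{HH}=:t_1$: the sender's conditional utility depends only on the realized signal. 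By the observation preceding Theorem~\ref{prop:quadruple} the played actions form a singleton (utility $u_S(a_j)$), a triple, or a quadruple, and by Theorem~\ref{prop:quadruple} the only candidates are tuples in $T$ or $Q$.

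For the lower bound I would handle the four values in turn. The babbling equilibrium yields $u_S(a_j)$ for every $F$. The values $v^*_1,v^*_\ell$ come from triples in $T$: following the sufficiency construction of Theorem~\ref{th:fb}, when $a_1$ (resp.\ $a_\ell$) is achievable the message $m_L=0$ (resp.\ $m_H=1$) reveals the state, the state-$0$ (resp.\ state-$1$) indifference pins the conditional utility there to $u_S(a_1)$ (resp.\ $u_S(a_\ell)$), and the utility in the other state differs by $(P(s_0\mid 1)-P(s_0\mid 0))(u_{HL}-u_{HH})$, which is $o(1)$ uniformly over $\mathcal B_\delta$; hence the equilibrium utility converges to $v^*_1$ (resp.\ $v^*_\ell$). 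For $Val(B)$ I would prove that near $x_i$ and $x_k$ the requirement $\{t_0,t_1\}$ be common to both messages forces $t_0=B_{l,1}$ on $s_0$ and $t_1=B_{l,2}$ on $s_1$ for the chosen quadruple $\bar a_l$, so the equilibrium utility is $P(s_0)B_{l,1}+P(s_1)B_{l,2}=\beta B_{l,1}+(1-\beta)B_{l,2}$ with $\beta=P(s_0)$. Since for any $F$ the sender may pick the row $l$ maximizing this, and the row player's best response to any column mixture weakly exceeds the game value, this is $\ge Val(B)$; thus $\inf_F v^*_F\ge Val(B)$ for all small $\delta$.

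For the upper bound I would exhibit an adversarial $F$. From the martingale identity $P(s_0)(\tfrac12-b_0)=P(s_1)(b_1-\tfrac12)$ with $b_0,b_1\in[\tfrac12-\delta,\tfrac12+\delta]$, the ratio $P(s_0)/P(s_1)$ can be set to any prescribed value for every $\delta$, so any $\beta\in(0,1)$ is realizable; I take $\beta=\beta^*$, the minimizing column mixture of $B$ (handling a pure optimum by continuity). Under this $F$ the best quadruple gives exactly $\max_l(\beta^*B_{l,1}+(1-\beta^*)B_{l,2})=Val(B)$, every triple gives $v^*_1$ or $v^*_\ell$ up to $o(1)$ (again by the uniform cross-state gap), babbling gives $u_S(a_j)$, and all remaining tuples are unsupportable. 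Hence $v^*_F\to\max\{v^*_1,v^*_\ell,u_S(a_j),Val(B)\}$ along this sequence, giving the matching bound. The whole argument can be packaged as the identity $\min_\beta\max\{C,f(\beta)\}=\max\{C,\min_\beta f(\beta)\}$, where $C=\max\{v^*_1,v^*_\ell,u_S(a_j)\}$ and $f(\beta)=\max_l(\beta B_{l,1}+(1-\beta)B_{l,2})$, whose minimum is $Val(B)$ by the minimax theorem; the case $Q=\emptyset$ is the special case where the $Val(B)$ term is absent.

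The hard part will be the quadruple forcing step, namely that the only feasible pair is $(t_0,t_1)=(B_{l,1},B_{l,2})$. This requires combining the $2\times2$-indifference reduction with the geometric fact that, for small $\delta$, each message's two combined posteriors straddle at most one segment boundary, so on each message at most one cell can be a non-degenerate receiver mixture while the other is a pure-action payoff; I must then verify that exactly the four orderings defining $Q$ admit a consistent choice of which cell mixes on each message, that monotonicity of the indirect utility across each boundary assigns the lower overlap endpoint to $s_0$, and that the matched values are precisely the overlap endpoints $B_{l,1},B_{l,2}$ (in particular that non-$Q$ quadruples have empty overlap or violate the single-crossing restriction, and non-endpoint triples fail the dual-state indifference). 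A secondary technical point is making the interchange of $\lim_{\delta\to0}$ and $\inf_F$ rigorous, for which the uniformity over $\mathcal B_\delta$ of the $o(1)$ cross-state gap in the triple equilibria is the key estimate.
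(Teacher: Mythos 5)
Your proposal follows essentially the same route as the paper's proof: your $2\times 2$ nonsingular-system reduction is the paper's Lemma~\ref{lem:ind}, your ``quadruple forcing step'' (that $(t_0,t_1)$ must equal the overlap endpoints $(B_{l,1},B_{l,2})$) is exactly the content of the paper's Lemma~\ref{lem:utility} and Corollary~\ref{cor:au}, and your lower/upper bound structure---existence of equilibria for every tuple in $T\cup Q$ plus an adversarial binary structure whose signal weight $\beta$ is the minimizing column mixture---matches the paper's use of Theorem~\ref{prop:quadruple} together with the minimization over $\beta$ yielding $Val(B)$. The proposal is correct in outline, and the step you flag as the hard part is precisely where the paper invests its main supporting lemma.
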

 The main idea behind the proof of Theorem \ref{theorem:binary} is the following. Consider Example~\ref{ex:2}, and the equilibrium that 
 is described at the end of Section \ref{sec:eq-characterization} and illustrated in Figure~\ref{fig:example2b}. This equilibrium is supported on the unique quadruple $(a_1,a_2,a_3,a_4)\in Q$, a  quadruple that is of type 1. In this example, the zero-sum game defined by $B$ consists of one row (since there is only one element in $Q$), and its values are $B_{1,1}=u_S(a_k)=u_S(a_3)=u_S(P_\emptyset)=1$ and $B_{1,2}=u_S(a_{i+1})=u_S(a_2)=u_S(P_0)=3$. The value of this game is thus 1. We will now show that this is equal to the sender maximal robust utility.
 
 First, observe that the sender's utility is $u_S(a_3)=1$ if the receiver receives the low signal $s_0$, and $u_S(a_2)=3$ if the receiver receives the high signal $s_1$. This is true regardless of the realized message $m$. Note that $\hat{v}_b(\pi)$ is the infimum over all binary information structures. Since there exists an infinitesimal, binary information structure for which realization $s_0$ is arbitrarily more likely than realization $s_1$, the value of $\hat{v}_b(\pi)$ in this example is 1, the same as $Val(B)$.
 
More generally, if the private information of the receiver is binary and the probability of the high signal is $\alpha$, then the sender's equilibrium utility when playing $\overline a\in Q$ is $\alpha u_S(a_k)+(1-\alpha)u_S(a_{i+1})$ if $\overline{a}$ is of type 1 or 2, and $\alpha u_S(a_i)+(1-\alpha)u_S(a_{k+1})$ if $\overline{a}$ is of type 3 or 4. This implies that, given $\alpha$, the sender's maximal utility across all equilibria that are supported on the interior is $\max_{1\leq l\leq \ell} \alpha B_{l,1}+(1-\alpha)B_{l,2}$. Minimizing this maximum across all possible values of $\alpha$ yields $Val(B)$, the robust sender-utility across all equilibria that are supported on the interior.

Let us illustrate this more general application of Theorem~\ref{theorem:binary} with an example, a further modification of Example~\ref{ex:2}.

\begin{example}\label{ex:3} Everything is as in Example~\ref{ex:2}, except that the receiver now has a fourth potential policy, $R_1$. This policy is best for the receiver only if he believes the state is $1$ with probability at least 0.8. This policy yields the sender utility $x\in(1,3)$.
\end{example}
The indirect utility for Example~\ref{ex:3} is illustrated in Figure~\ref{fig:four-actions}. Observe that, in this example, $Q$ contains two quadruples of actions: $\overline a_1=(a_1,a_2,a_3,a_4)$, which is of type 1, and $\overline a_2=(a_2,a_3,a_4,a_5)$, which is of type 4.

We now use Theorem \ref{theorem:binary} to derive $\hat v_b(\pi).$ 
\begin{figure}
    \centering
    \begin{tikzpicture}[baseline = 3cm, scale = 0.4]

        \draw[] (0,1) -- (20,1);
        \draw[->] (0,1) -- (0,16) node[left] {};
        
        \node[] at (4,1) (1) {$|$};
        \node[] at (4,0) (2) {$0.2$};
         \node[] at (8,1) (3) {$|$};
        \node[] at (8,0) (4) {$0.4$};
\node[] at (10,0) (5) {$\pi$};
\node[] at (10,1) (5) {$|$};

          \node[] at (12,1) (5) {$|$};
        \node[] at (12,0) (6) {$0.6$};
          \node[] at (16,1) (7) {$|$};
        \node[] at (16,0) (8) {$0.8$};
        \node[] at (20,1) (9) {$|$};
        \node[] at (20,0) (10) {$1$};
         
\node[] at (10,-2) (8) {receiver's belief};
        
        \node[] at (0,2) (1) {$-$};
        \node[] at (-1,2) (2) {$0$};
         \node[] at (0,5) (1) {$-$};
        \node[] at (-1,5) (2) {$1$};        
        \node[] at (0,8) (1) {$-$};
        \node[] at (-1,8) (2) {$x$};        
        \node[] at (0,11) (1) {$-$};
        \node[] at (-1,11) (2) {$3$};
        \node[] at (0,14) (1) {$-$};
        \node[] at (-1,14) (2) {$4$};

       
        \draw[thick] (0,2) -- (4,2);
        \draw[thick] (4,11) -- (8,11);
        \draw[thick] (8,5) -- (12,5);
        \draw[thick] (12,14) -- (16,14);
        \draw[thick] (16,8) -- (20,8);

    \end{tikzpicture}
    \caption{Example~\ref{ex:3}}\label{fig:four-actions}
\end{figure}
First, observe that the matrix $B$ is
\begin{table}[H]
\begin{center}
   
\begin{tabular}{lllll}
\cline{1-2}
\multicolumn{1}{|l|}{$1$} & \multicolumn{1}{l|}{$3$} &  &  &  \\ \cline{1-2}
\multicolumn{1}{|l|}{$3$} & \multicolumn{1}{l|}{$x$} &  &  &  \\
 \cline{1-2}
\end{tabular}
\end{center}
\end{table}
The value of $B$ is $Val(B)=\frac{9-x}{5-x}.$
In addition, we have that $v^*_1=-\infty$, $v^*_5=x$, and $v(\pi)=u_S(a_3)=1$. Since $Val(B)>x>1$, Theorem \ref{theorem:binary} implies that $\hat v_b(\pi)=\frac{9-x}{5-x}.$

\subsection{Optimal Sender-Utility in General Information Structures}\label{sec:general-info-structure}
In the previous section we characterized the maximal sender-utility that is robustly attainable under binary information structures of the receiver. In this section we use that result in order to provide bounds on the maximal sender-utility under general information structures. To this end, consider the matrix $B$ from Section~\ref{sec:binary-info-structure}. Let $V(B)$ be the {\em pure} min-max value of the zero-sum game defined by $B$, where the maximizing row player is restricted to pure strategies. The following theorem bounds $\hat{v}(\pi)$:
\begin{theorem}\label{theorem:general}
$\hat{v}(\pi)$ is bounded from above by $\hat{v}_b(\pi)$ and bounded from below by the maximum of $v^*_1,v^*_\ell$, $V(B)$, and $v(\pi)$.     
\end{theorem}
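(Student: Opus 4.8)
The proof splits into the two bounds. The upper bound $\hat v(\pi)\le\hat v_b(\pi)$ I would dispatch immediately: since $\mathcal B_\delta\subseteq\mathcal F_\delta$ for every $\delta$, taking the infimum of $v_F^*(\pi)$ over the larger family can only decrease it, so $\inf_{F\in\mathcal F_\delta}v_F^*(\pi)\le\inf_{F\in\mathcal B_\delta}v_F^*(\pi)$, and letting $\delta\to0$ gives the claim. Everything else is in the lower bound, which I would prove by checking that each of $v(\pi)$, $v^*_1$, $v^*_\ell$, and $V(B)$ separately bounds $\hat v(\pi)$ from below; since the sender may choose, for each $F$, the best of the corresponding equilibria, the maximum of the four is then also a lower bound.

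Two of the four terms are inherited from earlier results. For $v(\pi)=u_S(a_j)$ I would use the non-communicative equilibrium: when $\delta$ is small, every posterior $\pi^s$ arising from an $F\in\mathcal F_\delta$ has $lr(\pi^s)=lr(\pi)+lr(s)$ with $lr(s)$ close to $0$, so it stays in $\mathrm{int}(I_j)$ and the receiver plays $a_j$, securing $u_S(a_j)$. For $v^*_1$ (and symmetrically $v^*_\ell$), when $a_1$ is achievable Theorem~\ref{prop:quadruple} guarantees that the triple $(a_1,a_i,a_{i+1})\in T$ is robustly supported; in the corresponding equilibrium $m_L=0$ reveals state $0$, the state-$0$ indifference condition pins the state-$0$ conditional utility at $u_S(a_1)$, and the state-$1$ incentive constraint (a deviation to $m_L=0$ would yield $u_S(a_1)$) forces the state-$1$ conditional utility to be at least $u_S(a_1)$. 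Weighting the two states by $1-\pi$ and $\pi$ gives $v_F^*(\pi)\ge u_S(a_1)=v^*_1$ for all small $F$, hence $\hat v(\pi)\ge v^*_1$.

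The heart of the argument is $\hat v(\pi)\ge V(B)=\max_l\min_c B_{l,c}$. I would fix a pure row $l^*$ attaining this maximum, corresponding to a quadruple $\overline a_{l^*}=(a_i,a_{i+1},a_k,a_{k+1})\in Q$. By Theorem~\ref{prop:quadruple} this quadruple is robustly supported, so for every $F\in\mathcal F_\delta$ there is an equilibrium playing exactly these four actions; because four distinct actions appear, both messages are sent in both states, and the equilibrium therefore satisfies $\int_s u_S(\rho(m_L,s))\dd F_\omega(s)=\int_s u_S(\rho(m_H,s))\dd F_\omega(s)$ for each $\omega$. The crucial observation is that in each of the four orderings defining $Q$, one of the two messages induces only actions whose sender-utility is at least $\min_c B_{l^*,c}$: it is $m_H$ (inducing $a_k,a_{k+1}$) for types $1$ and $4$, and $m_L$ (inducing $a_i,a_{i+1}$) for types $2$ and $3$, as a direct check against the four utility orderings shows. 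Hence that message's conditional utility is at least $\min_c B_{l^*,c}$ in each state; by the indifference conditions the other message's conditional utility equals it in each state, so it is also at least $\min_c B_{l^*,c}$; and weighting by $1-\pi$ and $\pi$ gives $v_F^*(\pi)\ge\min_c B_{l^*,c}=V(B)$. As this holds for all $F\in\mathcal F_\delta$, we conclude $\hat v(\pi)\ge V(B)$.

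The main obstacle is conceptual as much as technical: it is this last step that explains why the passage from binary to general information structures replaces the mixed value $Val(B)$ of Theorem~\ref{theorem:binary} by the pure maximin $V(B)$. In the binary case a single high-signal probability effectively lets the sender randomize across quadruples and realize $Val(B)$; but any one cheap-talk equilibrium is supported on a single quadruple, confining the sender to a pure row, and the indifference-propagation trick shows a pure row can robustly promise exactly $\min_c B_{l^*,c}$ and no more. The points requiring genuine care are confirming that the equilibrium furnished by Theorem~\ref{prop:quadruple} is truly interior---so that both indifference conditions are available rather than a single one-sided incentive constraint---and specifying the receiver's mixing on the boundary beliefs $y_i,y_k$ to cover the signals that land exactly on a boundary; these are where the pointwise bound and the indifference identity must be verified in full.
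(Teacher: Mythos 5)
Your proof is correct and takes essentially the same route as the paper's: the upper bound from $\mathcal{B}_\delta\subset\mathcal{F}_\delta$, and the lower bound by invoking Theorem~\ref{prop:quadruple} to obtain, for every small general $F$, equilibria supported on triples in $T$ (giving $v^*_1,v^*_\ell$) and on quadruples in $Q$, where the indifference-propagation argument yields utility at least $\min\{B_{l,1},B_{l,2}\}$ for each row and hence $V(B)$. Your write-up actually spells out two points the paper's proof leaves implicit---the $v(\pi)$ term via the babbling equilibrium, and the case-by-case verification that in each ordering of $Q$ one message induces only actions with utility at least $\min_c B_{l,c}$.
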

We note that the upper bound in Theorem~\ref{theorem:general} is trivial, since $\mathcal{B}_\delta\subset \mathcal{F}_\delta$. The lower bound follows a similar logic as that of Theorem~\ref{theorem:binary}. However, there is an important difference. Under the binary information structures of Theorem~\ref{theorem:binary}, each such structure yields realization $s_0$ with some probability $\alpha$ and realization $s_1$ with probability $1-\alpha$. For any message $m$ of the sender, with $m\in I_i$ for some interval, the information structure then leads either to action $a_i$ or to pairs $(a_{i-1}, a_i)$ or $(a_i, a_{i+i})$, where the first element of the pair is played with probability $\alpha$ and the latter with probability $1-\alpha$. This distribution over actions, $\alpha$ and $1-\alpha$, is the {\em same} distribution for any message $m$ of the sender. However, when the information structure is no longer binary, then different messages could potentially lead to different distributions over actions. In this case, the value of the zero-sum game may no longer capture the precise utility attainable.

Nonetheless, we can still use Theorem~\ref{theorem:general} to derive bounds on the optimal informationally robust sender-utility under general information structures. In Example~\ref{ex:3}, for instance, note that, since $1<x<3$, the pure min-max value of the matrix $B$ (defined above) is $x$. Hence, $V(B)=x$. Theorem \ref{theorem:general} thus implies that $x\leq\hat v(\pi)\leq \frac{9-x}{5-x}$.

\section{Conclusion}\label{sec:conclusion}
In this paper we studied the informational robustness of cheap-talk equilibria. We derived a necessary and sufficient condition under which the utility at the quasiconcave closure is robust, and characterized the structure of equilibria and the maximal robust sender-utilities when the quasiconcave closure is not robust. One immediate question still left open by our work is to provide a full characterization of the optimal robust sender-utilities under general information structures. Our conjecture is that, like with binary information structure, the optimal robust sender-utility is captured by the value of the zero-sum game $B$. An additional open question is to generalize our results to a larger state-space.

\bibliography{info-robust-CT}

\newpage
\appendix
\begin{center}
\begin{Large}
\textbf{Appendix}
\end{Large}
\end{center}
\section{Proofs}
\subsection{Lemmas}
We begin with some lemmas that will be useful in our proofs.
\begin{lemma}\label{lem:ivt}
Let $\mu\in\Delta(\real)$ be any probability measure with bounded support. For every $y\in\real$, let $\psi_y=y+\mu$ be the $y$ translation of $\mu$. For any $z,c,d\in\real$ define a  correspondence
$H:\real\twoheadrightarrow [0,1]$ as follows:
$$H(y)=\{c\psi_y((-\infty,z))+d\psi_y((z,\infty))+ \psi_y(\{z\})(r c+(1-r)d)|r\in[0,1]\}.$$ 
Then the range of $H$ is $[c,d]$ if $c<d$ and $[d,c]$ if $d<c$.
\end{lemma}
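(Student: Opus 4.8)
The plan is to reparametrize the correspondence by the total weight it places on the value $c$. Writing $p_-(y)=\psi_y((-\infty,z))$, $p_0(y)=\psi_y(\{z\})$, and $p_+(y)=\psi_y((z,\infty))$, so that $p_-(y)+p_0(y)+p_+(y)=1$, every element of $H(y)$ has the form $\alpha c + (1-\alpha)d$ with $\alpha = p_-(y) + r\,p_0(y)$ for some $r\in[0,1]$. As $r$ ranges over $[0,1]$ this weight $\alpha$ sweeps exactly the interval $[p_-(y),\,p_-(y)+p_0(y)]$. Since $t\mapsto tc+(1-t)d$ is a continuous bijection from $[0,1]$ onto $[c,d]$ when $c<d$ and onto $[d,c]$ when $d<c$ (sending $t=1$ to $c$ and $t=0$ to $d$), it suffices to prove that the set of achievable weights $A=\bigcup_y\,[p_-(y),\,p_-(y)+p_0(y)]$ equals $[0,1]$. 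The inclusion $A\subseteq[0,1]$ is immediate, so the whole content is surjectivity.

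For surjectivity I would introduce the distribution function $F(t)=\mu((-\infty,t])$ and observe that $p_-(y)+p_0(y)=F(z-y)$ while $p_-(y)=F((z-y)^-)$. Writing $\phi(y)=F(z-y)$ and $\psi(y)=F((z-y)^-)$, the achievable weights at $y$ are precisely $[\psi(y),\phi(y)]$, and both functions are non-increasing in $y$. Because $\mu$ has bounded support, say inside $[-R,R]$, one gets $\phi(y)=0$ for $y>z+R$ and $\psi(y)=1$ for $y<z-R$, which places the endpoints $0,1$ in $A$. For an interior target $\alpha^*\in(0,1)$, set $y^*=\sup\{y:\phi(y)\ge\alpha^*\}$, finite by the support bound. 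Right-continuity of $F$ together with the reversal $t=z-y$ shows that $\phi$ is left-continuous in $y$ and that $\lim_{y\downarrow y_0}\phi(y)=\psi(y_0)$; monotonicity then gives $\phi(y)\ge\alpha^*$ for all $y<y^*$ and $\phi(y)<\alpha^*$ for all $y>y^*$. Passing to the left limit yields $\phi(y^*)\ge\alpha^*$ and to the right limit yields $\psi(y^*)\le\alpha^*$, whence $\alpha^*\in[\psi(y^*),\phi(y^*)]\subseteq A$. This establishes $A=[0,1]$ and therefore the claimed range.

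I expect the main obstacle to be the atoms of $\mu$, which are exactly the discontinuities of $\phi$: at such a jump at $y_0$ the range of the single function $\phi$ skips the open interval $(\psi(y_0),\phi(y_0))$. If $\mu$ were atomless the result would be a one-line application of the intermediate value theorem to the continuous, monotone map $\phi$ descending from $1$ to $0$. The purpose of the set-valued mixing parameter $r$ in the statement is precisely to fill these gaps: at a jump point the achievable weights form the \emph{closed} interval $[\psi(y_0),\phi(y_0)]$, covering exactly the value that $\phi$ alone omits. The only delicate bookkeeping is tracking how the one-sided limits of $F$ transform under the translation and the sign reversal $t=z-y$, which the argument above is designed to make precise.
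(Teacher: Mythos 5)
Your proof is correct, and its skeleton is the same as the paper's: you track the monotone-in-$y$ upper and lower envelopes of the achievable set---your $\phi(y)=F(z-y)$ and $\psi(y)=F((z-y)^-)$ are exactly the paper's $H^*$ and $H_*$ up to the affine reparametrization by the weight on $c$---and you observe that at a jump the mixing parameter $r$ fills the closed gap $[\psi(y),\phi(y)]$, so a monotone intermediate-value argument yields the full range. The difference lies in the tool used to justify the one-sided limit identities: the paper obtains them from weak convergence of $\psi_x$ to $\psi_y$ via the Portmanteau theorem, whereas you read them off directly from the right-continuity of the distribution function $F$ under the reversal $t=z-y$. Your route is more elementary and self-contained, and it is also more explicit on two points the paper leaves implicit: attainment of the endpoints $0$ and $1$ (via the bounded support) and the precise location $y^*=\sup\{y:\phi(y)\ge\alpha^*\}$ of the jump that captures a given interior target. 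What the paper's weak-convergence argument buys in exchange is brevity and portability---it would survive essentially unchanged if the translation family $\{\psi_y\}$ were replaced by any weakly continuous family of measures---while your CDF computation is tied to the translation structure, which is all the lemma actually needs.
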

\begin{proof}[Proof of Lemma \ref{lem:ivt}]
Assume without loss of generality that $c<d$. Let $H^*(y)=\psi_y((-\infty,z))c+\psi_y([z,\infty))d$, and $H_*(y)=\psi_y((-\infty,z])c+\psi_y((z,\infty))d$.
It is enough to show for every $y$ that if 
$\underl h=\lim_{x\to y^+}H^*(x)<\overline h=\lim_{x\to y^-}H^*(x)$, then $H(y)=[\underline h,\overline h].$ To see this note that $\psi_x$ weakly converges to $\psi_y$ as $x$ approaches $y$. Therefore, by the Portmanteau Theorem \citep[Theorem 2.1 in][]{billingsley2013convergence} it holds that $\lim_{x\to y^+}\psi_x([z,\infty))\leq \psi_x([z,\infty))$. Therefore, $H^*(y)\geq \lim_{x\to y^-}H^*(x).$ But since $c<d$ and since $H^*$ is monotonically non-decreasing we have that $\lim_{x\to y^-}H^*(x)=H(y)=\overline h$.  
Similarly, $\lim_{x\to y^+}\psi_x((-\infty,z])\leq \psi_y((-\infty,z])$. Therefore, $H_*(y)\leq \lim_{x\to y^+}H_*(x)$. The monotonicity of $H_*$ then implies that $H_*(y)=\lim_{x\to y^+}H_*(x)=\underl h$, as desired.

\end{proof}

In the following, denote by $\delta(c)$ the Dirac delta function at $c$, and by $\beta \delta(c) + (1-\beta)\delta(c')$ the distribution over distributions $\delta(c)$ and $\delta(c')$, where the former has probability $\beta$.
\begin{lemma}\label{lem:ind}
Let $\pi$ be the prior. Let $F$ be private information for the receiver with binary signals $S=\{s_0,s_1\}$ that induce the posterior distribution $\beta\delta(a)+(1-\beta)\delta(b)$ for some $\beta\in(0,1)$ and $0<b<a$ such that $\beta a+(1-\beta)b=\frac{1}{2}$. Let $(m_l,m_h)$ be the sender's messages in a cheap-talk equilibrium with two messages, where $0<m_l<\pi<m_h<1$. For $k\in\{0,1\}$ let $r_{l,k}$ and $r_{h,k}$ be the expected utilities of the sender conditional on the private signal $s_k$ of the receiver, given the messages 
$m_l$ and $m_h$, respectively. Then $r_{h,k}=r_{l,k}$ for each $k\in\{0,1\}$ .
\end{lemma}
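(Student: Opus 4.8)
The plan is to turn the two per-state indifference conditions imposed by the equilibrium into a $2\times 2$ homogeneous linear system in the two utility differences, and then to argue that its coefficient matrix is nonsingular precisely because $F$ is informative. First I would observe that, since $0<m_l<\pi<m_h<1$, we are in the first of the two cases discussed after the definition of a cheap-talk equilibrium: both messages are sent with positive probability in \emph{both} states. Consequently the sender's incentive constraint (condition~2 of the equilibrium definition) holds with equality in each state separately, giving
\[
\int_s u_S(\rho(m_l,s))\,\mathrm{d}F_\omega(s)=\int_s u_S(\rho(m_h,s))\,\mathrm{d}F_\omega(s)\qquad\text{for each }\omega\in\{0,1\}.
\]
Writing $p_\omega=F_\omega(s_1)=1-F_\omega(s_0)$ for the conditional probability of the high signal, and recalling that $r_{m,k}$ is the sender's utility conditional on $s_k$ and message $m$, each side is the two-point average $(1-p_\omega)r_{m,0}+p_\omega r_{m,1}$, so the two equations become $(1-p_\omega)r_{l,0}+p_\omega r_{l,1}=(1-p_\omega)r_{h,0}+p_\omega r_{h,1}$ for $\omega\in\{0,1\}$.

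Next I would set $d_k=r_{l,k}-r_{h,k}$ for $k\in\{0,1\}$, so that the desired conclusion $r_{h,k}=r_{l,k}$ is exactly $d_0=d_1=0$. Subtracting in each of the two equations above turns the equilibrium conditions into the homogeneous system
\[
\begin{pmatrix}1-p_0 & p_0\\ 1-p_1 & p_1\end{pmatrix}\begin{pmatrix}d_0\\ d_1\end{pmatrix}=\begin{pmatrix}0\\ 0\end{pmatrix},
\]
whose coefficient matrix has determinant $p_1-p_0$. Hence it suffices to show $p_0\neq p_1$, after which invertibility forces $d_0=d_1=0$.

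Finally I would establish $p_0\neq p_1$ from the informativeness of $F$, which is the crux of the argument. By Bayes' rule from the prior $\tfrac12$, the posterior attached to $s_1$ is $a=p_1/(p_0+p_1)$ and that attached to $s_0$ is $b=(1-p_1)/(2-p_0-p_1)$; a short computation shows that the numerator of $a-b$ equals $p_1-p_0$ (the denominator being positive), so $a-b$ and $p_1-p_0$ share the same sign. The hypothesis $0<b<a$ then forces $p_1>p_0$, making the determinant strictly positive, and we conclude $d_0=d_1=0$. The only delicate point is exactly this non-degeneracy: everything hinges on the two states inducing genuinely different signal distributions. Were the signal uninformative ($p_0=p_1$), the system would be rank one and only the single, signal-averaged indifference would survive; it is precisely the informativeness of $F$ (the strict inequality $b<a$) that upgrades that averaged equality into a separate indifference for \emph{each} signal realization.
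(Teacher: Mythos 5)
Your proof is correct and follows essentially the same route as the paper's: both turn the two per-state indifference conditions into a $2\times 2$ homogeneous linear system in the utility differences and conclude via nonsingularity of the coefficient matrix, which holds exactly because the signal is informative ($a\neq b$, equivalently $a\neq\tfrac12$ in the paper's parameterization). The only difference is bookkeeping---you work with the conditional signal probabilities $p_\omega=F_\omega(s_1)$ and verify $p_1>p_0$ by Bayes' rule, while the paper works directly with the conditional posterior distributions $2a\beta$ versus $2(1-a)\beta$---so the two arguments are interchangeable.
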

\begin{proof}[Proof of Lemma \ref{lem:ind}]
Assume that $s_1$ induces the high posterior $a$ and $s_0$ the low posterior $b$.  
Note that since $m_l$ and $m_h$ induce a cheap-talk equilibrium and since $\{m_l,m_h\}\cap\{0,1\}=\emptyset$, both signals are sent with positive probability conditional on the two states $\omega\in\{0,1\}$. Therefore, the sender must be indifferent between the two messages $m_l,m_h$ condition on any state $\omega\in\{0,1\}$. It follows from Bayes rule that the posterior distribution of the receiver's private belief conditional on state $\omega=1$ is $F_1=2a\beta\delta(a)+(1-2a\beta)\delta(b)$, and the posterior distribution conditional on state $\omega=0$ is $F_0=2(1-a)\beta\delta(a)+(1-2(1-a)\beta)\delta(b).$
The indifference conditional on state $\omega=1$ implies the  inequality
\begin{align}\label{indif1}
2a\beta r_{h,1}+(1-2a\beta)r_{h,0}=2a\beta r_{l,1}+(1-2a\beta)r_{l,0}.
\end{align}
Similarly, the indifference conditional on state $\omega=0$ implies
\begin{align}\label{indif0}
2(1-a)\beta r_{h,1}+(1-2(1-a)\beta)r_{h,0}=2(1-a)\beta r_{l,1}+(1-2(1-a)\beta) r_{l,0}.
\end{align}
Since $2a\beta\neq 2(1-a)\beta$ the two inequalities can hold simultaneously if and only if $r_{h,1}=r_{l,1}$ and $r_{h,0}=r_{l,0}$, as claimed.
\end{proof}
The following lemma will be used both for Theorem \ref{th:fb} and Theorem \ref{prop:quadruple}.
\begin{lemma}\label{lem:endpointExistence}
Every $\overline{a}\in T$ is robustly supported.   
\end{lemma}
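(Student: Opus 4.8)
The plan is to construct, for each $\overline a\in T$, an explicit family of binary cheap-talk equilibria---one for every sufficiently small $\delta$ and every $F\in\mathcal F_\delta$---whose support is exactly $\overline a$. I will treat the type-1 triple $\overline a=(a_1,a_i,a_{i+1})$ in detail; the type-2 case follows by the symmetry $\lambda\mapsto 1-\lambda$, $\omega\mapsto 1-\omega$. The idea mirrors the sufficiency direction of Theorem~\ref{th:fb}: one message fully reveals state $0$ and is answered by $a_1$, while the other message induces a belief sitting near the boundary $x_i$ between $I_i$ and $I_{i+1}$, so that the receiver's private signal splits his posterior across this boundary and he plays $a_i$ on low signals and $a_{i+1}$ on high signals.

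Concretely, I would let the sender send $m_L=0$ only in state $0$ (with some probability $\theta_0$) and send $m_H$ with $lr(m_H)=y^*$ with probability one in state $1$ and with probability $1-\theta_0$ in state $0$; self-consistency of the induced posterior $m_H$ pins down $\theta_0\in(0,1)$ precisely when $y^*>lr(\pi)$. Upon $m_L$ the receiver is certain of state $0$ regardless of his signal, so he plays $a_1$ and the sender obtains $u_S(a_1)$. Upon $(m_H,s)$ his posterior in log-likelihood coordinates is $y^*+lr(s)$, and he plays $a_i$ or $a_{i+1}$ according to whether this lies below or above $y_i=lr(x_i)$, mixing if it equals $y_i$. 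The central step is to choose $y^*$ (and, if needed, the receiver's boundary mixing) so that the sender's expected utility conditional on state $0$ and message $m_H$ equals $u_S(a_1)$. This is exactly Lemma~\ref{lem:ivt} applied with $\mu$ the law of $lr(s)$ under $F_0$, threshold $z=y_i$, and $(c,d)=(u_S(a_i),u_S(a_{i+1}))$: the lemma says the attainable values fill the whole interval $[u_S(a_i),u_S(a_{i+1})]$, and the defining inequality of a type-1 triple, $u_S(a_i)<u_S(a_1)<u_S(a_{i+1})$, places the target strictly inside. Hitting this interior target forces the conditional-on-state-$0$ law of realized utilities to put positive weight both below and above $y_i$, so that both $a_i$ and $a_{i+1}$ (and, via $m_L$, also $a_1$) are played with positive probability; and since $y^*$ lies within $O(\delta)$ of $y_i=lr(x_i)>lr(\pi)$, for $\delta$ small enough one has $y^*>lr(\pi)$ and every posterior $y^*+lr(s)$ stays inside $J_i\cup J_{i+1}$, so no other action is ever played. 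This yields the state-$0$ indifference, making the sender's mixing in state $0$ incentive compatible.

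The one remaining obstacle---and the crux of the argument---is the sender's incentive constraint in state $1$, where $m_H$ is played with probability one: I must rule out a profitable deviation to $m_L$, which would again be answered by $a_1$ and yield $u_S(a_1)$. Writing $r_\omega$ for the sender's expected utility conditional on state $\omega$ and message $m_H$, I need $r_1\ge u_S(a_1)=r_0$. Here I would use that the realized sender utility at $(m_H,s)$ is a nondecreasing function of the signal $s$, since it equals $u_S(a_i)$ below the threshold $y_i$ and the strictly larger value $u_S(a_{i+1})$ above it, together with the fact that the likelihood ratio $dF_1/dF_0(s)=s/(1-s)$ is increasing, so $F_1$ first-order stochastically dominates $F_0$. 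Consequently $r_1\ge r_0=u_S(a_1)$, the deviation is unprofitable, and the profile is an equilibrium with support exactly $\overline a$. Since the $\delta_0$ needed above depends only on the fixed gaps $y_i-lr(\pi)$, $y_i-y_{i-1}$, and $y_{i+1}-y_i$, the construction works uniformly over all $F\in\mathcal F_\delta$, which is precisely what robust support demands.
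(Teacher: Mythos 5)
Your proposal is correct and follows essentially the same route as the paper's proof: the same equilibrium construction (one message fully revealing state $0$ and yielding $u_S(a_1)$, the other inducing a belief near the boundary $y_i$), the same appeal to Lemma~\ref{lem:ivt} to hit the interior target $u_S(a_1)\in(u_S(a_i),u_S(a_{i+1}))$ for the state-$0$ indifference, and the same first-order stochastic dominance argument of $F_1$ over $F_0$ to dispose of the state-$1$ incentive constraint. Your treatment is in fact slightly more careful than the paper's on two points---pinning down the sender's mixing probability via Bayes consistency and noting explicitly that $\delta_0$ depends only on the fixed gaps---but these are refinements of, not departures from, the paper's argument.
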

\begin{proof}[Proof of Lemma~\ref{lem:endpointExistence}]
Let $F\in\Delta([0,1])$ be an information structure supported on $[\frac 1 2-\delta,\frac 1 2+\delta]$ such that $2\epsilon:=\log\Big(\frac{\frac{1}{2}+\delta}{\frac{1}{2}-\delta} \Big)<|J_k|$ 
for every $1\leq k\leq \ell$. Recall that $F_\omega$ be its conditional distribution given state $\omega\in\{0,1\}$ and $G_\omega\in\Delta(\real)$ is the corresponding log-likelihood distribution to $F_\omega$. Note that $G_\omega$ is supported on $[-\epsilon,\epsilon]$. 

Without loss of generality assume that $\overline a=(a_1,a_i,a_{i+1})$ for some $1<i<\ell$. We will show there exists a binary cheap talk with a support  $\overline a$.

Let $\lambda\in(0,1)$. Consider a decision problem for the receiver where the prior is $\lambda$ and the private signal is drawn according to $F$. 
Assume that $y=lr(\lambda)\in[y_i-\epsilon,y_i+\epsilon]$. 
Note that the log-likelihood of the posterior belief for the receiver is distributed according to 
the measure $y+\lambda G_1+(1-\lambda)G_0$. 
By construction, for $y\in[y_i-\epsilon,y_i+\epsilon]$ the support of the log-likelihood distribution lies in $J_i\cup J_{i+1}$. Therefore, the posterior belief of the receiver lies in $I_i\cup I_{i+1}.$ Let $\psi_y=y+G_0$ be the conditional log-likelihood distribution of posteriors given state $\omega=0$ as a function of $y$.
Assume that in case of indifference between actions $a_i$ and $a_{i+1}$ (which happens for the posterior $x_{i}$) the receiver plays action $a_i$ with probability $r$ and action $a_{i+1}$ with probability $1-r$.
 
 The conditional expected utility for the sender, given the state $\omega=0$, is 
$$u_S(a_{i})\psi_y(-\infty,y_i)+u_S(a_{i+1})\psi_y(y_i,\infty)+[ru_S(a_i)+(1-r)u_S(a_{i+1})]\psi_y(y_i).$$

We note that the set of all possible expected utilities given the state $\omega=0$, as a function of $y$ and when the receiver plays rationally in the above decision problem,  is given exactly by the set of values $H(y)$ that is defined in Lemma \ref{lem:ivt}. 
Since $u_S(a_i)<u_S(a_1)<u_S(a_{i+1})$ we can rely on Lemma \ref{lem:ivt} and have a point $y'\in[y_k-\epsilon,y_k+\epsilon]$ and a value $r$ such that 
$$u_S(a_1)=u_S(a_{i})\psi_y(-\infty,y_i)+u_S(a_{i+1})\psi_y(y_i,\infty)+[ru_S(a_i)+(1-r)u_S(a_{i+1})]\psi_y(y_i).$$
Let $\lambda'=lr^{-1}(y').$

We consider the following cheap-talk equilibrium.
The sender sends two messages $\{m_l,m_h\}$. Message $m_l$ reveals the state $\omega=0$ and corresponds to a posterior $0$. Message $m_h$ corresponds to the posterior $\lambda'$. We assume that if, after observing his private information, the receiver is indifferent between $a_{i}$ and $a_{i+1}$, then he plays action $a_i$ with probability $r$. We note that conditional on state $\omega=1$, the message $m_h$ is sent with probability one. Conditional on state $\omega=0$ both messages are sent with positive probability. Therefore, in order to show that the above signals constitute a cheap-talk equilibrium we need to show that (i) in state $\omega=0$ the sender is indifferent between the two messages and (ii) in state $\omega=1$ sending $m_h$ is better than sending $m_l$. Condition (i) follows directly from the construction. Condition (ii) Follows since $F_1$ first order stochastically dominates $F_0$ and Therefore, $F_1((x_i,1])$ is higher than $F_0((x_i,1])$. This implies that the utility of $m_h$ conditional on state $\omega=1$ is larger than $u_S(a_1)$. Therefore, signal $m_h$ yields a higher utility to the sender than the message $m_l$ conditional on $\omega=1$. This completes the converse direction.    
\end{proof}

The following lemma is complementary to Lemma \ref{lem:endpointExistence}.
\begin{lemma}\label{lem:inermExistence}
  Every $\overline a\in Q$ is robustly supported.  
\end{lemma}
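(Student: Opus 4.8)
The plan is to mirror the construction used in the proof of Lemma~\ref{lem:endpointExistence}, but now with \emph{two interior messages} rather than one message that reveals the state. Fix $\overline{a}=(a_i,a_{i+1},a_k,a_{k+1})\in Q$ with $i\le j-1$, $k\ge j$, and $i+1<k$. As before, choose $\delta$ small enough that $2\eps:=\log\big(\tfrac{1/2+\delta}{1/2-\delta}\big)<|J_m|$ for every interval, so that a posterior whose log-likelihood lies within $\eps$ of an \emph{interior} breakpoint $y_i$ (resp.\ $y_k$) can only induce the two adjacent actions $a_i,a_{i+1}$ (resp.\ $a_k,a_{k+1}$). For a candidate low message inducing belief $\lambda_L$ with $lr(\lambda_L)=y\in[y_i-\eps,y_i+\eps]$, Lemma~\ref{lem:ivt} shows that, as $y$ and the receiver's tie-breaking $r$ range over their domains, the sender's \emph{conditional-on-$\omega=0$} expected utility sweeps out the entire interval between $u_S(a_i)$ and $u_S(a_{i+1})$; and the conditional-on-$\omega=1$ utility likewise sweeps between these two values (the two states give different weightings but the same two extreme values). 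The same is true for a high message near $y_k$, sweeping the interval between $u_S(a_k)$ and $u_S(a_{k+1})$.

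The core of the argument is to choose the two induced beliefs so that, for the appropriate $s$-conditioning, the sender is indifferent between the two messages in \emph{both} states. Unlike the endpoint case, here both messages are sent with positive probability in both states, so the equilibrium requires the full dual indifference of condition~1 in the model section; by Lemma~\ref{lem:ind} (applied to a binary symmetric structure) this dual indifference reduces to requiring that the sender's \emph{signal-conditional} utilities $r_{h,k}=r_{l,k}$ match for each signal realization $k\in\{0,1\}$. The key observation is that, whichever of the four orderings defining $Q$ holds, the two target values $B_{\cdot,1},B_{\cdot,2}$ (namely $u_S(a_k),u_S(a_{i+1})$ for types 1–2, or $u_S(a_i),u_S(a_{k+1})$ for types 3–4) both lie \emph{simultaneously} in the range swept by the low message \emph{and} in the range swept by the high message. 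This is exactly what the nesting of utilities in conditions 1–4 guarantees: in each case one of the first-pair utilities lies between the two second-pair utilities and vice versa. I would therefore pin the receiver's mixing at the two beliefs so that on signal $s_0$ the sender's utility equals $B_{\cdot,2}$ and on signal $s_1$ it equals $B_{\cdot,1}$, \emph{regardless of which message was sent}. Then the sender's payoff depends only on the realized signal and not on the message, so indifference holds in both states and the profile is an equilibrium playing exactly $\overline{a}$.

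Concretely, I would proceed in the following order. First, set up the binary symmetric information structure and record, via Lemma~\ref{lem:ivt}, the two swept intervals for the low and high messages as functions of $(y,r)$. Second, use the ordering hypothesis to identify the two common target values and invoke the intermediate-value conclusion of Lemma~\ref{lem:ivt} twice—once to solve for $(y_L,r_L)$ realizing the pair $(B_{\cdot,1},B_{\cdot,2})$ at the low message, and once for $(y_H,r_H)$ at the high message—so that on each signal realization both messages yield the \emph{same} sender utility. Third, verify that the resulting induced unconditional beliefs respect $m_L<\pi<m_H$ (using that $\pi\in\mathrm{int}(I_j)$ with $i\le j-1\le k$) and that the receiver's prescribed actions are indeed best replies, so that the four actions played with positive probability are precisely $\overline{a}$. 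Finally, confirm robustness: the construction works for every sufficiently small $\delta$ and, by the same tie-breaking device, extends from the symmetric binary structure to all $F\in\mathcal{F}_\delta$ of binary support, yielding robust support in the sense required.

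I expect the main obstacle to be the second step—establishing that the two target values really do lie in the common range of \emph{both} messages for all four orderings at once, and that the beliefs solving the two indifference equations can be chosen consistently (i.e.\ with $y_L$ near $y_i$ and $y_H$ near $y_k$ while keeping $m_L<\pi<m_H$). The genericity assumption $u_S(a)\ne u_S(b)$ and the strict nesting in conditions 1–4 are what make the intermediate values strictly interior and hence attainable, but checking each of the four cases and confirming that the induced beliefs straddle $\pi$ is where the real bookkeeping lies.
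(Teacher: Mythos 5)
Your construction is correct for \emph{binary} information structures, and it matches the paper's own informal discussion of Example~\ref{ex:2} (place one message so that the posterior after $(m_L,s_0)$ sits exactly at the breakpoint $x_i$, the other so that the posterior after $(m_H,s_1)$ sits at $x_k$, and tune the two tie-breaking probabilities so that the sender's payoff depends only on the realized signal; then dual indifference is automatic). However, there is a genuine gap: the lemma's conclusion, ``robustly supported,'' quantifies over \emph{every} $F\in\mathcal{F}_\delta$, not only over structures with binary support, and you explicitly concede at the end that your argument covers only the binary case. Your central device---pinning the receiver's mixing so that the sender's utility is constant across messages conditional on each signal realization---relies essentially on the signal having two atoms that can be placed exactly at indifference points. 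For a general $F$ (say, with continuously distributed posteriors), the sender's payoff from a message with log-likelihood $y$ conditional on state $\omega$ is an integral $\int U(y+z)\,\dd G_\omega(z)$ over many posteriors; the receiver is indifferent only on a null set, so you cannot equalize payoffs signal-by-signal, and achieving indifference conditional on $\omega=0$ no longer implies indifference conditional on $\omega=1$, since $G_0\neq G_1$.

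This is exactly where the paper does its real work: it proves a separate result (Lemma~\ref{lem:two-equalities}) showing that for any such $G_0,G_1$ with $G_1$ first-order stochastically dominating $G_0$, one can find message positions $v_b,v_w$ and tie-breaking parameters $r_b,r_w$ solving the \emph{two} indifference equations simultaneously. The proof parametrizes the low-message payoff by $p$, defines $H_0(p)=h_0^{-1}(f_0(p))$ and $H_1(p)=h_1^{-1}(f_1(p))$ (where $f_\omega,h_\omega$ are the state-$\omega$ payoffs from the low and high messages), and applies the Intermediate Value Theorem to $d(p)=H_1(p)-H_0(p)$, using the FOSD relation and the strict nesting of utilities in the definition of $Q$ to get $d(0)\leq 0$ and $d(c)>0$. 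Your plan contains no substitute for this step; Lemma~\ref{lem:ivt} alone gives you one-dimensional sweeps for each state separately, but not a simultaneous solution of both state-conditional equations for a non-binary $F$. So as written, your proposal proves the weaker statement that every $\overline a\in Q$ is supported for all small binary structures (which is what Theorem~\ref{theorem:binary} needs), not Lemma~\ref{lem:inermExistence} itself.
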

  
\begin{proof}[Proof of Lemma~\ref{lem:inermExistence}] 
Choose $\delta_0>0$ small enough so that $|J_i|>2\log\Big(\frac{\frac 1 2+\delta_0}{\frac 1 2-\delta_0}\Big)$ for every $i$.
Let $F_\omega$ be the conditional distribution of $F$ given state $\omega\in\{0,1\}$, and let $G_\omega\in\Delta(\real)$ be the corresponding log-likelihood distribution of $F_\omega$. Let  $\inf \mathrm{supp}(G_\omega) = -\eta$ and $\sup \mathrm{supp}(G_\omega) = \eps$. Note that, by construction, $|J_i|>\eta + \eps$. Recall the notation that $J_i=[y_{i-1}, y_i]$.

We begin by considering the cases in which the quadruple $(a_i,a_{i+1},a_k,a_{k+1})\in Q$ is of type 3 or 1. In the former case, let $x_1 = y_{i-1}$, $x_2=y_i$, $x_3 = y_{i+1}$,  $x_4 = y_{k-1}$, $x_5=y_k$, and $x_6 = y_{k+1}$. In the latter case, let $x_1 = y_{k-1}$, $x_2=y_k$, $x_3 = y_{k+1}$,  $x_4 = y_{i-1}$, $x_5=y_i$, and $x_6 = y_{i+1}$.

For both cases, let $U(x)|r$ be the corresponding utility of the sender when the receiver's belief (in terms of log-likelihood) is $x$, and when the receiver randomizes and plays the higher action (the one on the right) with probability $r$ when indifferent. Lemma~\ref{lem:two-equalities} now implies that there exist $v_b<v_w$ and $r_b, r_w\in [0,1]$ such that
$|v_b-x_2|\leq\max\{\eps,\eta\}$, $|v_w-x_5|\leq\max\{\eps,\eta\}$, and the following two qualities hold:
$$E_{Z\sim v_b+G_0}[U(Z)|r_b]=E_{Z\sim v_w+G_0}[U(Z)|r_w] ~~\mbox{ and }~~ E_{Z\sim v_b+G_1}[U(Z)|r_b]=E_{Z\sim v_w+G_1}E[U(Z)|r_w].$$
When the sender sends messages leading to log-likelihood ratios $v_b$ and $v_w$, then, both indifference conditions hold. Thus, this forms a cheap-talk equilibrium with private information $F$. Furthermore, in this equilibrium only the actions $(a_i,a_{i+1},a_k,a_{k+1})$ are played, as required.

Finally, if the quadruple $(a_i,a_{i+1},a_k,a_{k+1})\in Q$ is of type 4 or 2, 
then swap the labeling of the states $\omega\in\{0,1\}$. By symmetry, the proof above goes through in an identical fashion.
\end{proof}

\subsection{Proof of Theorem \ref{th:fb}}

 We next proceed to the proof of Theorem \ref{th:fb}.
\begin{proof}[\textbf{Proof of Theorem \ref{th:fb}}]
We first show the converse direction. Clearly, $v^*_0(\pi)=u_S(a_j)$ implies that
$\hat{v}(\pi)=u_S(a_j)$. Assume that $v^*_0(\pi)=u_S(a_i)\neq u_S(a_j)$ for $i\in\{1,\ell\}$.  Without loss of generality assume that $v^*_0(\pi)=u_S(a_1)$. Since $v^*_0(\pi) = \min \left\{\max_{\lambda\leq \pi}v(\lambda), \max_{\mu\geq \pi}v(\mu)\right\}$,
there exists an interval $i\geq j$ such that $u_S(a_{i})<u_S(a_1)<u_S(a_{i+1})$. Therefore,
$(a_1,a_i,a_{i+1})\in T$, and, by Lemma \ref{lem:endpointExistence}, there exists $\delta_0$ such that for every $\delta<\delta_0$ and any $F\in\mathcal{F}_\delta$ there exists a cheap-talk equilibrium supported on   
$(a_1,a_i,a_{i+1})$. It readily follows from the construction that the sender's utility in any equilibrium supported on $(a_1,a_i,a_{i+1})$ approaches $u_S(a_1)$ as $\delta$ approaches zero. 

We next show that if $v^*_0(\pi)>u_S(a_j)$ and $v^*_0(\pi)\neq u_S(a_i)$ for $i=1,\ell$, then $\hat v(\pi)<v^*_0(\pi).$ Assume that $v^*_0(\pi)=u_S(a_i)$.  Without loss of generality, we can assume that 
$v^*_0(\pi)=\max_{k> j} u_S(a_k)$.
Assume by way of contradiction that $\hat v(\pi)= v^*_0(\pi).$

For every $n$ let $F^n$ be a binary information structure that generates the posterior distribution $\frac{1}{2}\delta\left(\frac{1}{2}-\frac{1}{n}\right)+\frac{1}{2}\delta\left(\frac{1}{2}+\frac{1}{n}\right)$.
We identify the low posterior $\frac{1}{2}-\frac{1}{n}$ with the signal $s^n_0$ and the high posterior with the signals $s^n_1$.
By the contradiction assumption, we can find a sequence of equilibria that yield a utility of $c^n$ to the sender such that $\lim_n c^n=u_S(a_i)$. Without loss of generality, we can further assume that each of the equilibrium is induced by two messages $\{m_l^n,m^n_h\}$ such that $m^n_h>\pi$ and $m^n_l<\pi$. 

We first note that since signals become uninformative, we must have that the total variation distance of $F^n_1$ and $F^n_0$ approaches zero with $n$.  We contend that the conditional expected utility to the sender given the high message $m^n_h$ approach $v^*_0(\pi)=u_S(a_i)$ as $n$ goes to infinity. For if not, then the conditional utility must lie $\eta>0$ below $u_S(a_i)$ for some $\eta>0$ and for some subsequence $\{m_l^{n_k},m^{n_k}_h\}_k$. This implies that the conditional utility given the low signal $m_l^{n_k}$ lies $\eta>0$ above  $u_S(a_i)$ for some $\eta>0$. Therefore, sending the high message yields a strictly larger utility to the sender than sending the low message conditional on both states $\omega$. We thus have a contradiction to the equilibrium assumption.

Since the conditional expectation to the sender given the high message $m^n_h$ approaches $u_S(a_i)$, we must have that $m^n_h\in[x_{i-1},x_i]$ for all sufficiently large $n$. In addition, the conditional expectation to the sender given the low signal also approaches $u_S(a_i)$ as $n$ goes to zero. Since $u_S(a_0)\neq u_S(a_i)$ it follows that $\{m_l^n,m^n_h\}\cap\{0,1\}=\emptyset$ for all sufficiently large $n$. 
As a result, conditional on both states $\omega\in\{0,1\}$ the sender's expected utility from sending both signals is equal. 

As in Lemma \ref{lem:ind}, for $k\in \{1,0\}$, let $r^n_{h,k}$ and $r^n_{l,k}$ be the expected utility to the sender conditional on the private signal $s^n_k$ of the receiver given the messages 
$m_l^n,m^n_h$ respectively.
It follows from Lemma \ref{lem:ind} that $r^n_{h,1}=r^n_{l,1}$ and $r^n_{h,0}=r^n_{l,0}$. Since the conditional expectation to the sender given the high message $m^n_h$ approaches $u_S(a_i)$, we must have that both $r^n_{h,0}$ and $r^n_{h,1}$ approaches $u_S(a_i)$ as $n$ goes to infinity. Since there exists a value $k\neq i$ such that, for all sufficiently large $n$, either $r^n_{l,0}=u_S(a_k)$ or $r^n_{l,1}=u_S(a_k)$, it must hold that $u_S(a_k)=u_A(a_i)$. 
This, however, stands in contradiction to our genericity assumption. 
\end{proof}

\subsection{Proof of Theorem~\ref{prop:quadruple}}
We begin with the following lemma.
\begin{lemma}\label{lem:utility}
Let $F$ be a private binary information structure for the receiver that is supported on $[\frac{1}{2}-\delta,\frac{1}{2}+\delta]$ for some $\delta>0$ such that $2\log(\frac{\frac{1}{2}+\delta}{\frac{1}{2}-\delta})<|J_i|$ for every $i\in[l]$. If $0<m_l<\pi<m_h<1$ defines a non-trivial binary cheap-talk equilibrium, then the equilibrium is supported in the interior. Moreover, in this case it holds that if the equilibrium actions are of types 1 or 2 from $Q$ then $r_{h,0}=r_{l,0}=u_S(a_k) $ and $r_{h,1}=r_{l,1}=u_S(a_{i+1})$, and if they are of types 3 or 4 then $r_{h,0}=r_{l,0}=u_S(a_i)$ and $r_{h,1}=r_{l,1}=u_S(a_{k+1})$.  
\end{lemma}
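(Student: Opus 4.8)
The plan is to reduce everything to the two signal-conditional indifference equalities and then run a short combinatorial analysis of how boundary mixing can reconcile them. First I would record the hypotheses: since $0<m_l<\pi<m_h<1$, both messages are sent with positive probability in both states, so both of the sender's incentive constraints hold with equality; feeding this into Lemma~\ref{lem:ind} yields $\rho_0:=r_{l,0}=r_{h,0}$ and $\rho_1:=r_{l,1}=r_{h,1}$. Next I would exploit the width hypothesis $2\log(\ldots)<|J_i|$: because the log-likelihood spread of the two signals is smaller than every interval $J_i$, the two posteriors generated by a single message straddle at most one threshold $y_n$. Hence each message induces either a single action $a_n$ or an adjacent pair $\{a_n,a_{n+1}\}$, and at most one of its two signal-conditional utilities can be a genuine boundary mixture (the other posterior lands in an interval interior and equals a pure value $u_S(a_n)$).

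The first substantive step is to show $\rho_0\neq\rho_1$ and that each message induces exactly two actions. If a message had constant conditional utility, then by the previous paragraph it would effectively play a single action (a straddle yields two distinct pure values, and a degenerate mixture collapses to one action); if both messages did so, genericity would force them into the same interval and the equilibrium would be trivial. So non-triviality gives $\rho_0\neq\rho_1$ and both messages straddle exactly one threshold. I would then label $m_l$'s pair as $\{a_i,a_{i+1}\}$ (straddling $y_i$, with $s_0\to a_i$, $s_1\to a_{i+1}$) and $m_h$'s pair as $\{a_k,a_{k+1}\}$ (straddling $y_k$), noting $i<k$, and that $m_l<\pi<m_h$ together with $\pi\in\mathrm{int}(I_j)$ pins down $i\le j-1$ and $k\ge j$.

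The heart of the argument is a covering count. Since $r_{l,0}$ sits near $a_i$ while $r_{h,0}$ sits near $a_k$ (distinct actions, as $i<k$), the equality $r_{l,0}=r_{h,0}=\rho_0$ cannot hold with both values pure, so at least one of these two is a boundary mixture; likewise at least one of $r_{l,1},r_{h,1}$ is a mixture. But each message supplies at most one mixture, so only two allocations are admissible: (I) the mixtures are $r_{l,0}$ and $r_{h,1}$, leaving $r_{h,0}=u_S(a_k)$ and $r_{l,1}=u_S(a_{i+1})$ pure; or (II) the mixtures are $r_{l,1}$ and $r_{h,0}$, leaving $r_{l,0}=u_S(a_i)$ and $r_{h,1}=u_S(a_{k+1})$ pure. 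The two discarded allocations (both mixtures among the low signals, or both among the high signals) force $u_S(a_i)=u_S(a_k)$ or $u_S(a_{i+1})=u_S(a_{k+1})$, contradicting genericity. In case (I) I would read off $\rho_0=u_S(a_k)$ and $\rho_1=u_S(a_{i+1})$, and the requirement that the two mixtures actually realize these values forces $u_S(a_k)$ strictly between $u_S(a_i),u_S(a_{i+1})$ and $u_S(a_{i+1})$ strictly between $u_S(a_k),u_S(a_{k+1})$; these two betweenness relations are exactly types~1 and~2 of $Q$. Case (II) symmetrically yields types~3 and~4. Moreover $\rho_0\neq\rho_1$ forces $a_k\neq a_{i+1}$ (and in each case the two betweenness relations are mutually inconsistent when $k=i+1$), so $i+1<k$; the four actions are distinct, the tuple lies in $Q$, the equilibrium is supported in the interior, and the displayed values of the $r_{\cdot,\cdot}$ follow.

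The step I expect to be the main obstacle is the bookkeeping inside the covering count: correctly tracking, for each message, which of its two signals may carry the boundary mixture (only the posterior sitting exactly on the straddled threshold), and verifying that a degenerate mixture equal to a neighboring pure utility genuinely collapses that message to a single action. Equally delicate is checking that the pair of betweenness relations surviving in each case coincides exactly with the four sign patterns in the definition of $Q$ and excludes every other ordering of the four distinct utilities, since this is precisely what upgrades ``four actions are played'' to ``the tuple belongs to $Q$.''
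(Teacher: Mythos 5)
Your proposal is correct and takes essentially the same route as the paper's own proof: both arguments rest on Lemma~\ref{lem:ind} together with the width condition (each message's posteriors straddle at most one threshold, so each message supplies at most one boundary mixture), and your two admissible allocations (I) and (II) are exactly the paper's two cases, namely whether the low message's mixture sits at $s_0$ (forcing $r_{h,0}=r_{l,0}=u_S(a_k)$ and $r_{h,1}=r_{l,1}=u_S(a_{i+1})$, i.e.\ types 1/2) or at $s_1$ (types 3/4), with your betweenness relations reproducing the paper's utility orderings. Your covering-count packaging is a transparent equivalent of that case analysis, and your explicit exclusion of $k=i+1$ via the mutually inconsistent betweenness relations spells out a step the paper leaves implicit in its appeal to genericity.
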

\begin{proof}[Proof of Lemma~\ref{lem:utility}]
The condition over $F$ guarantees that for any posterior $m\in[0,1]$ the agents' private beliefs are supported on at most two distinct intervals $[x_{i-1},x_i]$. This means that no more than $4$ actions are played with a positive probability at the equilibrium. 

 We claim that $r_{h,0}\neq r_{h,1}$. Assume by way of contradiction that $r_{h,0}= r_{h,1}$. 
Then, since at most two actions are played conditional on any message, it follows from the genericity assumption that $r_{h,0}= r_{h,1}=u_S(a_i)$ for some $i$. By Lemma \ref{lem:ind} this implies that 
$r_{l,0}= r_{l,1}=u_S(a_i)$ for all $n$. This is possible only if $a_i=a_j$ which means that the equilibrium is trivial. 
Similarly, we must also have that $r_{l,0}\neq r_{l,1}$. Therefore, conditional on $m_l$ two consecutive actions $a_i,a_{i+1}$ are played with positive probability, and conditional on $
m_h$ two consecutive actions $a_k,a_{k+1}$ are played with positive probability. 
Assume first that $u_S(a_i)<u_S(a_{i+1}),$ we show that either condition 1. or condition 3. must hold.

We first note that conditional on $m_l$ and the low signal $s_0$ action $a_i$ is played with positive probability for otherwise since the receiver's posterior probability given $m_l$ and $s_1$ is strictly higher, action $a_i$ would have been played with probability zero. This is a contradiction to the fact four distinct actions are played with positive probability. Similarly, conditional on $m_l$ and the high signal $s_1$ action $a_{i+1}$ is played with positive probability.

We distinguish two cases. Assume first that action $a_{i+1}$ is also played with positive probability given $m_l$ and the low signal $s_0$. In this case, since both actions are played with positive probability, the receiver must be indifferent hence his posterior given $m_l$ and  $s_0$ is $x_i$. Therefore, since the receiver's posterior given $m_l$ and  $s_1$ is strictly larger, action $a_{i+1}$ is uniquely played. Therefore, $r_{l,0}<r_{l,1}=u_S(a_{i
+1}).$ Lemma \ref{lem:ind} implies that $r_{h,0}<r_{h,1}=u_S(a_{i
+1})$. Since only actions $a_k,a_{k+1}$ are played with positive probability given $m_{h}$ it follows from our genericity assumption both actions $a_k\text{ and } a_{k+1}$ are played with positive probability given $m_h$ and the high signal $s_1$. That is the receiver's posterior given $m_h$ and the high signal $s_1$ is $x_k$. Therefore, since the receiver's posterior given $m_h$ and  $s_0$ is lower we must have that $r_{h,0}=u_S(a_k)$. Since  $r_{h,0}=r_{l,0}=u_S(a_k)$ and since $r_{l,0}<r_{l,1}=r_{h,1}=u_S(a_{i+1})$. We must have that 
$u_S(a_i)<u_S(a_k)<u_S(a_{i+1})<u_S(a_{k+1})$. Thus $\overline a=(a_i,a_{i+1},a_k,a_{k+1})\in Q$ satisfies the first condition and in addition, the conditions over the utilities hold.     

Assume next that action $a_{i+1}$ is played with  probability zero given $m_l$ and the low signal $s_0$.
In this case, we must have that $r_{l,0}=u_S(a_i).$ 
Since $u_S(a_{i+1})>u_S(a_i)$ we have $r_{l,0}<r_{l,1}$.
Since $r_{l,0}=r_{h,0}$ it follows from the genericity assumption that both actions $a_k$ and $a_{k+1}$ are played with positive probability given $m_{h}$ and $s_0$ and action $a_{k+1}$ is played with probability one given $m_{h}$ and $s_1$. Since by Lemma \ref{lem:ind} $r_{h,0}<r_{h,1}$ 
 it must hold that $u_S(a_k)<u_S(a_{k+1})$. Moreover since 
$r_{l,0}=r_{h,0}$ we must have that
$u_S(a_k)<u_S(a_i)<u_S(a_{k+1})$. Since action $a_{k+1}$ is played with probability one given $m_l$ and  $s_1$ we must have that $r_{h,1}=r_{l,1}=u_S(a_{k+1}).$ This is possible only if $u_S(a_i)<u_S(a_{k+1})<u_S(a_{i+1}).$ Altogether, we have that $u_S(a_k)<u_S(a_i)<u_S(a_{k+1})<u_S(a_{i+1}).$ Thus the equilibrium satisfies condition 3. and it is supported in the interior as desired.  

The fact that if $u_S(a_i)>u_S(a_{i+1})$, then only conditions 2. or 4. are possible, is shown similarly. This completes the proof of the lemma.
\end{proof}

We get the following corollary from Lemma \ref{lem:utility}.
\begin{corollary}\label{cor:au}
 Let $G$ be a private binary information structure for the sender as in Lemma \ref{lem:utility}  that generates the posterior distribution  
 $\beta\delta(a)+(1-\beta)\delta(b)$. Let $0<m_l<\pi<m_h<1$ define a non-trivial cheap-talk equilibrium that is indeterminately supported on four actions $\overline a =(a_i,a_{i+1},a_k,a_{k+1})$ that corresponds to  row $l\in[q]$ in the above matrix $B$. Then the sender equilibrium utility is:
 $\beta B_{l,1}+(1-\beta)B_{l,2}.$
\end{corollary}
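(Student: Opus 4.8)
The plan is to obtain the corollary as a short assembly step on top of Lemma~\ref{lem:utility}, since that lemma already does the structural work (interior support and the exact conditional utilities); all that remains is to average over the receiver's private signal. First I would invoke Lemma~\ref{lem:utility} for the equilibrium $0<m_l<\pi<m_h<1$ on the quadruple $\overline a=(a_i,a_{i+1},a_k,a_{k+1})$ corresponding to row $l$. The lemma gives $r_{h,0}=r_{l,0}$ and $r_{h,1}=r_{l,1}$; writing $r_0$ and $r_1$ for these two common values, the same lemma together with the definition of $B$ yields $r_0=B_{l,1}$ and $r_1=B_{l,2}$, uniformly across the type 1/2 case (where $r_0=u_S(a_k)$ and $r_1=u_S(a_{i+1})$) and the type 3/4 case (where $r_0=u_S(a_i)$ and $r_1=u_S(a_{k+1})$).

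Next I would compute the sender's ex-ante equilibrium utility by conditioning on the receiver's signal. The crucial observation is that, given a message $m$ and signal $s_k$, the action $\rho(m,s_k)$ is determined by the receiver's strategy and the sender's payoff is state-independent, so the conditional sender utility equals $u_S(\rho(m,s_k))=r_{m,k}$, which by the previous step does \emph{not} depend on $m$. Writing the ex-ante utility as an expectation over $(\omega,m,s)$, the message strategy $\sigma(\cdot\mid\omega)$ therefore integrates out (it has total mass one in each state) and the state averages into the marginal signal distribution, leaving
\[
\mathbb{E}[u_S]=P(s_0)\,r_0+P(s_1)\,r_1=P(s_0)\,B_{l,1}+P(s_1)\,B_{l,2},
\]
where $P(s_k)$ is the ex-ante probability of signal $s_k$. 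I would stress here that message-independence of $r_{m,k}$ is exactly what collapses the correlation between state, message, and signal into the bare marginal signal probabilities.

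Finally I would read the signal probabilities off the posterior distribution $\beta\delta(a)+(1-\beta)\delta(b)$, which assigns its two signals the mixture weights $\beta$ and $1-\beta$, and conclude that the sender's utility equals $\beta B_{l,1}+(1-\beta)B_{l,2}$. I expect the only delicate point to be precisely this last bookkeeping: one must match which realization (the high-posterior $s_1$ versus the low-posterior $s_0$) is paired with $B_{l,1}$ versus $B_{l,2}$, so that the weight $\beta$ lands on the intended entry, and be consistent about reading the weights from the signal distribution rather than from the prior-$\tfrac12$ Bayes-plausibility normalization $\beta a+(1-\beta)b=\tfrac12$. This reconciliation of conventions, and not any new substantive argument, is where care is required; the structural content is entirely furnished by Lemma~\ref{lem:utility} and Lemma~\ref{lem:ind}.
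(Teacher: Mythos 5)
Your proposal is correct and follows essentially the same route as the paper's own (one-sentence) proof: invoke Lemma~\ref{lem:utility} to get message-independent conditional sender utilities equal to $B_{l,1}$ (low signal) and $B_{l,2}$ (high signal), then average over the receiver's private signal with weights $\beta$ and $1-\beta$. Your explicit integration over $(\omega,m,s)$ and your bookkeeping caveat are precisely the details the paper leaves implicit---including its silent identification of the prior-$\pi$ signal marginals with the prior-$\tfrac{1}{2}$ mixture weights $\beta,1-\beta$, which is exact at $\pi=\tfrac{1}{2}$ and holds only up to a vanishing (as $\delta\to 0$) error otherwise.
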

\begin{proof}
  Assume that $\overline a$ satisfies condition 1., then it follows from Lemma \ref{lem:utility} that the sender's utility is $\beta u_S(a_k)+(1-\beta)u_S(a_{i+1})$. This by the definition of the matrix $B$ equals 
  $\beta B_{l,1}+(1-\beta)B_{l,2}.$ The case where $\overline a$ satisfies one of the conditions 2-4 follows similarly.
\end{proof}

Before proving Theorem~\ref{prop:quadruple}, we need one additional lemma.
Let $x_1<x_2<x_3$ and $x_4<x_5<x_6$  be six real numbers such that either $x_3<x_4$ or $x_6<x_1$, and such that all of $x_2-x_1$, $x_3-x_2$, $x_5-x_4$, and $x_6-x_5$ are strictly greater than $\eta+\eps$, for some $\eta,\eps>0$. Consider two measures $G_0,G_1\in\Delta([-\eta,\eps])$. Assume that both measures have the same support, and that it contains both $-\eta$ and $\eps$. Assume that $G_1$  first order stochastically dominates $G_0$ with a strict inequality for any point $x\in (-\eta,\eps)$. 
Let $\gamma<\alpha<\delta<\beta$ and let $U$ be a  function such that $U(x)=\alpha$ for $x\in [x_1, x_2)$, $U(x)=\beta$ for $x\in(x_2,x_3]$, $U(x)=\gamma$ for $x\in[x_4,x_5)$, and  $U(x)=\delta$ for $x\in(x_5, x_6]$. At $x=x_2$ (resp., $x=x_5$), the utility $U(x)$ is the range $[\alpha,\beta]$ (resp., $[\gamma,\delta]$). The exact value will be determined by a mixing parameter $r\in [0,1]$, and so denote by $U(x_2)|r = (1-r)\alpha + r\beta$, and by $U(x_5)|r=(1-r)\gamma+r\delta$. For any $x\not\in\{x_2,x_5\}$, let $U(x)|r = U(x)$.
Finally, for every $x\in\real$ and $\omega\in\{0,1\}$ denote by $x+G_\omega$ the shift of $G_\omega$ by $x$. 

We now state and prove Lemma~\ref{lem:two-equalities}.

\begin{lemma}\label{lem:two-equalities}
Under the above conditions there exist $v_b<v_w$ and $r_b, r_w\in [0,1]$ such that
$|v_b-x_2|\leq\max\{\eps,\eta\}$, $|v_w-x_5|\leq\max\{\eps,\eta\}$, and the following two qualities hold:
$$E_{Z\sim v_b+G_0}[U(Z)|r_b]=E_{Z\sim v_w+G_0}[U(Z)|r_w] ~~\mbox{ and }~~ E_{Z\sim v_b+G_1}[U(Z)|r_b]=E_{Z\sim v_w+G_1}E[U(Z)|r_w].$$
\end{lemma}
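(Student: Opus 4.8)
The plan is to collapse the two simultaneous equalities into a single one-dimensional intermediate-value argument, treating one ``block'' around $x_2$ and one around $x_5$. First I would record the geometric consequence of the gap hypotheses: for any shift $v_b$ with $|v_b-x_2|\le\max\{\eps,\eta\}$ the support of $v_b+G_\omega$ lies in $[x_1,x_3]$, so on this block $U$ takes only the values $\alpha$ (to the left of $x_2$) and $\beta$ (to its right); likewise, for $|v_w-x_5|\le\max\{\eps,\eta\}$ the support of $v_w+G_\omega$ lies in $[x_4,x_6]$, where $U$ takes only $\gamma$ and $\delta$. I would then define, for each target level $c$, the induced $G_1$-expectation. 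Concretely, for $c\in[\alpha,\beta]$ let $P_L(c)$ be the value $E_{Z\sim v_b+G_1}[U(Z)\mid r_b]$ obtained when $(v_b,r_b)$ is chosen so that $E_{Z\sim v_b+G_0}[U(Z)\mid r_b]=c$, and for $c\in[\gamma,\delta]$ define $P_R(c)$ analogously on the right block.

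For these to be well-defined continuous functions, I would argue as follows. Lemma~\ref{lem:ivt} (applied with $z=x_2$, $(c,d)=(\alpha,\beta)$, $\mu=G_0$) shows that as $(v_b,r_b)$ varies the left-block $G_0$-expectation attains every value in $[\alpha,\beta]$; since it is monotone in the shift $v_b$, I can invert it and read off $P_L$. The hypothesis that $G_0$ and $G_1$ share a support is what makes $P_L$ single-valued and continuous: on any subinterval where the $G_0$-expectation is locally constant in $v_b$ (a gap of the common support) the $G_1$-expectation is locally constant as well, so the two move together. Monotonicity of $U$ on each block together with the first-order stochastic dominance of $G_1$ over $G_0$ gives $P_L(c)\ge c$ and $P_R(c)\ge c$. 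Finally, at the extreme levels all of the block mass is sent to one side: $P_L(\alpha)=\alpha$ (all mass of both $v_b+G_0$ and $v_b+G_1$ lies weakly left of $x_2$, with the boundary atom assigned to the left) and, symmetrically, $P_R(\delta)=\delta$.

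The decisive step is the crossing. Because $\gamma<\alpha<\delta<\beta$, the domains of $P_L$ and $P_R$ overlap exactly on $[\alpha,\delta]$, with $\alpha$ the left endpoint of the $P_L$-range yet interior to the $P_R$-range, and $\delta$ the right endpoint of the $P_R$-range yet interior to the $P_L$-range. Consider $D(c)=P_L(c)-P_R(c)$ on $[\alpha,\delta]$. Then $D(\alpha)=\alpha-P_R(\alpha)\le 0$ and $D(\delta)=P_L(\delta)-\delta\ge 0$, and the strictness of the dominance (at the interior shifts realizing these levels, mass genuinely straddles the threshold, so the $G_1$-expectation strictly exceeds the $G_0$-expectation) makes both inequalities strict. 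By the intermediate value theorem there is $c^\ast\in(\alpha,\delta)$ with $P_L(c^\ast)=P_R(c^\ast)$. Choosing $(v_b,r_b)$ and $(v_w,r_w)$ that realize $G_0$-expectation $c^\ast$ on the respective blocks makes both the $G_0$- and the $G_1$-expectations agree across the two messages, which are precisely the two required equalities. The norm bounds hold because the realizing shifts stay within their block ranges, and $v_b$ and $v_w$ are distinct---with the stated ordering---because the two blocks are separated by the gap hypothesis (I would treat the case $x_3<x_4$, the case $x_6<x_1$ being symmetric after relabeling the blocks).

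I expect the main obstacle to be the careful construction of $P_L$ and $P_R$ as genuinely continuous, single-valued functions when $G_0$ and $G_1$ carry atoms (in particular at the endpoints $-\eta,\eps$) or have gaps in their common support. The mixing parameters $r_b,r_w$, together with the range statement of Lemma~\ref{lem:ivt}, are exactly what is needed to fill the jumps caused by atoms, while the shared-support hypothesis synchronizes the two expectations across gaps; strict dominance is then what upgrades the crossing to an interior point $c^\ast\in(\alpha,\delta)$, so that both messages straddle their thresholds and all four actions are played with positive probability in the downstream application.
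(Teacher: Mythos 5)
Your proof is correct and takes essentially the same approach as the paper's: both arguments use Lemma \ref{lem:ivt} to parametrize the admissible shift-and-mixing pairs on each of the two blocks, invoke first-order stochastic dominance of $G_1$ over $G_0$ to fix the signs at the two endpoints, and then apply the intermediate value theorem to a one-dimensional difference function whose zero yields the two simultaneous equalities. The only difference is a reparametrization---you compare the blocks' $G_1$-expectations $P_L,P_R$ as functions of the common $G_0$-expectation level $c$, while the paper compares right-block parameters $H_1(p)=h_1^{-1}(f_1(p))$ and $H_0(p)=h_0^{-1}(f_0(p))$ as functions of the left-block parameter $p$; these encode the same crossing, and your treatment of atoms via the mixing parameters matches the paper's (equally informal) handling.
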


%
\begin{proof}[Proof of Lemma~\ref{lem:two-equalities}]
For each $p\in[0,1]$, let $f_0(p) = (1-p) \alpha + p \beta$. By Lemma \ref{lem:ivt}, for every $p$ there exist a pair $(v_p, r_p)$ with $v_p\in [x_2-\eps, x_2+\eta]$ and $r_p\in [0,1]$ that satisfy the following:
 Given mixing parameter $r_p$, we have $f_0(p) = E_{Z\sim v_p+G_0}[U(Z)|r_p]$.\footnote{This pair may not be uniquely defined and in this case we can choose the pair $(v_p, r_p)$ arbitrarily.} 
Let $f_1(p) = E_{Z\sim v_p+G_1}[U(Z)|r_p]$. Furthermore, let $s(p)$ satisfy  $f_1(p) = (1-s(p))\gamma + s(p) \delta$. 

In words, $f_0(p)$ is the expected utility of the sender when the receiver plays the higher action $\beta$ with probability $p$, and the lower action $\alpha$ with probability $1-p$. In addition, $f_1(p)$ is the expected utility of the sender when the receiver plays the higher action $\beta$ with probability $s(p)$, and the lower action $\alpha$ with probability $1-s(p)$, where $s(p)$ is chosen such that the belief $v_p$ and mixing probability $r_p$ that lead to $p$ in state $\omega=0$ lead to $s(p)$ in state $\omega=1$. Note that, since $G_0$ and $G_1$ have the same support, and by the assumptions on the distributions, $s(p)\geq p$. In particular, this also implies that $f_1(p)\geq f_0(p)$. Finally, note that both $f_0$ and $f_1$ are continuous with continuous inverses. 

Similarly, for each $p\in[0,1]$, let $h_0(p) = (1-p) \gamma + p \delta$. Each such $p$ defines a pair $(v_p, r_p)$ with $v_p\in [x_5-\eps, x_5+\eta]$ and $r_p\in [0,1]$ that satisfy the following: Given that the receiver plays the action yielding $\delta$ with probability $r_p$ and the action yielding $\gamma$ with probability $1-r_p$ when indifferent, we have $h_0(p) = E_{Z\sim v_p+G_0}[U(Z)|r_p]$.
Let $h_1(p) = E_{Z\sim v_p+G_1}[U(Z)|r_p]$. Furthermore, let $t(p)$ satisfy  $h_1(p) = (1-t(p))\gamma + t(p) \delta$. Note that, as above, $t(p)\geq p$, and so $h_1(p)\geq h_0(p)$. Finally, note that both $h_0$ and $h_1$ are continuous with continuous inverses.

Fix some $c\in [0,1]$ such that $f_1(c) = \delta$.
For $p\in [0, c]$, let
$$H_0(p) = h_0^{-1}(f_0(p)),$$
$$H_1(p) = h_1^{-1}(f_1(p)),$$
and
$$d(p)=H_1(p)- H_0(p).$$

We now show that $d(0)\leq 0$. 
First,  by the definition of $U$ and the fact that the support of $G_\omega$ is contained in $[-\eta,\eps]$ it follows that $f_\omega(0)=\alpha$ for both $\omega\in\{0,1\}$.
In addition, since $x_5-x_4>\eta+\eps,$ it also holds that $h_\omega(0)=\gamma.$ 
Since $\gamma<\alpha<\beta$ there exist $z_\omega\in (0,1)$ such that $h_\omega(z_\omega)=\alpha$ for each $\omega\in\{0,1\}$. We have
$$h_0(z_0)=(1-z_0) \gamma + z_0 \delta = \alpha$$
and
 $$h_1(z_1)=(1-t(z_1)) \gamma + t(z_1) \delta = \alpha,$$
but $t(z_1)\geq z_1$ implies that $z_1\leq z_0$. Finally, observe that $z_1= H_1(0)$ and $z_0 = H_0(0)$, and so $d(0) = z_1-z_0 \leq 0$.
 
Next, we show that $d(c)> 0$. Recall that $c$ satisfies $f_1(c) = \delta$. The set $H_1(c)$ is thus equal to all $y\in[0,1]$ for which $h_1(y)=\delta$. In order for this equality to hold, we must have $H_1(c)=1$.
Next, note that, since $f_0(p)\leq f_1(p)$, we have that $f_0(c)\leq \delta$. This implies that $H_0(c)$ cannot be equal to 1. Thus, $d(c) = 1-H_0(c)>0$.

We have shown that $d(0)\leq 0$ and $d(c)> 0$. By the Intermediate Value Theorem, there exists $b\in [a,c]$ such that $d(b)=0$. This implies that there exists $w\in [0,1]$ such that $f_0(b)=h_0(w)$ and $f_1(b)=h_1(w)$. Noting that $b$ maps to (at least one) pair $(v_b, r_b)$ and $w$ maps to (at least one) pair $(v_w,r_w)$ completes the proof of the lemma.
\end{proof}

We next prove Theorem \ref{prop:quadruple}.
\begin{proof}[\textbf{Proof of Theorem \ref{prop:quadruple}}]
The fact that every $\overline{a}\in Q\cup E$ is robustly supported follows from Lemma \ref{lem:endpointExistence} and Lemma \ref{lem:inermExistence}. It follows from Lemma \ref{lem:utility} that if $F$ is a binary information structure supported on $[\frac{1}{2}-\delta,\frac{1}{2}+\delta]$ and $2\log\Big(\frac{\frac{1}{2}+\delta}{\frac{1}{2}-\delta} \Big)<|J_i|$ for every $1\leq i\leq n$ then any nontrivial cheap talk binary equilibrium with $0<s_l<\pi<s_h<1$ is supported in the interior.

To complete the proof of Theorem \ref{prop:quadruple} we will show that if $F$ supported on $[\frac{1}{2}-\delta,\frac{1}{2}+\delta]$ and $2\log\Big(\frac{\frac{1}{2}+\delta}{\frac{1}{2}-\delta} \Big)<|J_i|$, then any binary equilibrium with $m_l=0<\pi<m_h<1$ or with $0<m_l<\pi<m_h=1$ is supported by an endpoint. 

Assume without loss of generality that $m_l=0<\pi<m_h$. We note that the condition on $F$ implies that at most three actions are played with positive probability. Moreover, the genericity assumption implies that conditional on $s_h$ two consecutive actions $a_i,a_{i+1}$ are played with positive probability. To complete the proof, we need to show that $u_S(a_i)<u_S(a_1)<u_S(a_{i+1})$. Since conditional on state $\omega=0$ both messages $m_l$ and $m_h$ are sent with positive probability, the sender has the same conditional utility from $s_l$ and $s_h$ at state $\omega=0$. This means that we only need to rule out the case where $u_S(a_{i+1})<u_S(a_1)<u_S(a_{i})$. Let $\alpha$ be the probability that action $a_i$ is played given $m_h$ and state $\omega=0$. From the indifference condition, we must have that $\alpha u_S(a_i)+(1-\alpha)u_S(a_{i+1})$. But since $F_1$ first order stochastically dominates $F_0$ the probability that $a_i$ is played given state $\omega=1$ and $m_h$ is strictly smaller than $\alpha$. This means that the sender's conditional utility given the state $\omega=1$ and $m_h$ is strictly smaller than $u_S(a_1)$. This implies that sending the message $m_l$ on state $\omega=1$ yields a profitable deviation for the sender as it guarantees a utility of $u_S(a_1)$. This stands in contradiction to the assumption that $m_l=0<\pi<m_h$ is an equilibrium. 
\end{proof}

\subsection{Proofs of Theorems~\ref{theorem:binary} and~\ref{theorem:general}}

We next turn to prove Theorem \ref{theorem:binary}.
\begin{proof}[\textbf{Proof of Theorem \ref{theorem:binary}}]
Consider a binary information structure $G$ with a support that is contained in $[\frac{1}{2}-\delta,\frac{1}{2}+\delta]$ as in Lemma \ref{lem:utility}.
Assume that $F$ induces the belief distribution $\beta\delta(a)+(1-\beta)\delta(b)$. Then, by Corollary \ref{cor:au} the optimal utility for the sender in a cheap-talk equilibrium that is supported in the interior is $\max_{1\leq l\leq q}\beta B_{l,1}+(1-\beta)B_{l,2}$. 
In addition,  cheap-talk equilibria that are supported in the interior are the only non-trivial equilibria with $0<m_l<\pi<m_h<1$.

Conversely, it follows from Theorem \ref{prop:quadruple} and Lemma \ref{lem:utility} that for all sufficiently small $\delta<0$, any $F\in\mathcal{B}_\delta$, and every $\overline a_l\in Q$ there exists a cheap-talk equilibrium with a utility   
$\beta B_{l,1}+(1-\beta)B_{l,2}$ for the sender. Thus, for every $F$, the maximal sender's utility of a cheap-talk equilibrium is $\beta B_{l,1}+(1-\beta)B_{l,2}.$ Note that for every $\delta>0$ and $\beta\in(0,1)$ there exists a binary signal $F$  with posterior distribution $\beta\delta(a)+(1-\beta)\delta(b)$  supported on $[\frac{1}{2}-\delta,\frac{1}{2}+\delta]$ for some $a<\frac{1}{2}<b$. Therefore, the maximal utility of the sender from a  cheap-talk equilibrium that is supported in the interior can be made arbitrarily close to
$$\min_{0\leq\beta\leq 1}\max_{0\leq l\leq q}\beta B_{l,1}+(1-\beta)B_{l,2}=Val(B).$$

Furthermore, if $a_1$ is achievable, the same logic as in the proof of the Theorem \ref{th:fb} shows that for $F\in\mathcal{B}_\delta$ and sufficiently small $\delta$ there exists an equilibrium $(m_l,m_h)$ with $m_l=0$. Moreover, the sender's utility from such an equilibrium approaches $u_S(a_1)$ as $\delta$ goes to zero. 
A similar conclusion holds for the case $a_n$ is achievable.
Overall we conclude that 
$$\hat v_b(\pi)=\lim_{\delta\to 0}\inf_{F\in\mathcal{B}_\delta}v_F^*(\pi)=\max\{Val(B),v^*_1,v^*_0,u_S(a_j)\},$$
as desired.

\end{proof}

We next prove Theorem \ref{theorem:general}.
\begin{proof}[\textbf{Proof of Theorem \ref{theorem:general}}]
The facts that $\hat v(\pi)\geq v^*_1$ and $\hat v(\pi)\geq v^*_0$ follow as in the proofs of Theorem \ref{th:fb} and Theorem \ref{theorem:binary}. Furthermore, it follows from Theorem \ref{prop:quadruple} that there exists $\delta_0$ such that for every $\delta<\delta_0$ any $\overline a_l\in Q$ and $G\in\mathcal{B}_\delta$ there exists a cheap-talk equilibrium that is supported in the interior.  Theorem \ref{prop:quadruple} implies that that the utility in this equilibrium is at least $\min\{B_{l,1},B_{l,2}\}$. Therefore, $\hat v(\pi)\geq \min\{B_{l,1},B_{l,2}\}$ for any $l\in [q]$. Thus, $\hat v(\pi)\geq V(B)$. This completes the proof of Theorem \ref{theorem:general}.    
\end{proof}


%
%

\end{document}